\documentclass[opre,nonblindrev]{informs3} 

\OneAndAHalfSpacedXI 


\usepackage{endnotes}

%


\usepackage{natbib}
 \bibpunct[, ]{(}{)}{,}{a}{}{,}%
\usepackage{booktabs} 

\usepackage[ruled]{algorithm2e} 

\usepackage{ bbm, color, enumerate, amsbsy, amsfonts, amsopn, amssymb,  amsxtra, bezier, graphicx, latexsym, verbatim,
	pictexwd, supertabular, url, dsfont,leftidx, setspace}
\usepackage{mathtools,bm}
\usepackage{multirow}
\usepackage{footnote}
\makesavenoteenv{tabular}
\makesavenoteenv{table}

\usepackage[ruled]{algorithm2e}

\newcommand{\mb}[1]{\ensuremath{\boldsymbol{#1}}}
\newcommand{\one}{\mathbbm{1}^\omega}

\newcommand{\onee}{\mathbbm{1}}

\def\opt{\textsc{OPT}}
\def\alg{\textsc{ALG}}
\def\topt{\textsc{OPT}^{tr}}

\def\umi{u_{\text{-}i}}
\def\omitt{\omega_{\text{-}it}}
\def\ui{u_{i}}
\def\ub{\mb{u}}
\def\mcala{\mathcal{A}}

\TheoremsNumberedThrough     
\ECRepeatTheorems

\EquationsNumberedThrough    


\begin{document}


\RUNAUTHOR{Goyal and Udwani}

\RUNTITLE{Online Matching with Stochastic Rewards}
\TITLE{Online Matching with Stochastic Rewards: Optimal Competitive Ratio via Path Based Formulation}

\ARTICLEAUTHORS{%
\AUTHOR{Vineet Goyal}
\AFF{Columbia University,
	 \EMAIL{vgoyal@ieor.columbia.edu}} 
\AUTHOR{	Rajan Udwani}
\AFF{UC Berkeley,
	 \EMAIL{rudwani@berkeley.edu}}
} 

\ABSTRACT{%
The problem of online matching with stochastic rewards is a generalization of the online bipartite matching problem where each edge has a probability of success. When a match is made it succeeds with the probability of the corresponding edge. Introducing this model, \cite{deb} focused on the special case of identical edge probabilities. Comparing against a deterministic offline LP, they showed that the Ranking algorithm of Karp et al. (STOC 1990) is 0.534 competitive and proposed a new online algorithm with an improved guarantee of $0.567$ for vanishingly small probabilities. For the case of vanishingly small but heterogeneous probabilities, \cite{mehta} gave a 0.534 competitive algorithm against the same LP benchmark.
 
We consider the more general vertex-weighted version of the problem and give two new results. First, we show that a natural generalization of the Perturbed-Greedy algorithm of \cite{goel}, is $(1-1/e)$ competitive when probabilities decompose as a product of two factors, one corresponding to each vertex of the edge. This is the best achievable guarantee as it includes the case of identical probabilities and in particular, the classical online bipartite matching problem. Second, we give a deterministic $0.596$ competitive algorithm for the previously well studied case of fully heterogeneous but vanishingly small edge probabilities. A key contribution of our approach is the use of novel path-based formulations and a generalization of the primal-dual scheme of \cite{devanur}. These allow us to compare against the natural benchmarks of adaptive offline algorithms that know the sequence of arrivals and the edge probabilities in advance, but not the outcomes of potential matches. 
These ideas may be of independent interest in other online settings with post-allocation stochasticity.

}%


\KEYWORDS{Online Matching, Stochastic Rewards, Competitive Ratio, Path Based Analysis, Primal-Dual}

\maketitle

%


\section{Introduction}\label{intro}	
Online bipartite matching and its variants and generalizations have been intensely studied for past few decades \citep{survey}. Internet advertising is a major application domain for these problems, along with Crowdsourcing \citep{crowd1,crowd2}, 
personalized recommendations \citep{negin, will}, ridesharing~\citep{ashlagi,ali,rad}, online order order fulfillment \citep{fultut} and many others. Despite a rich body of work on these problems, there are several basic questions that remain open. In this work, we study one such problem. To introduce our setting, we start with the classic online bipartite matching problem.
\smallskip

\noindent \textbf{Online bipartite matching~\citep{kvv}:} 
Consider a graph $G(I,T,E)$ where the vertices in $I$, called \emph{ resources} (may correspond to advertisers in Internet advertising and tasks in Crowdsourcing), are known in advance. Vertices in $T$, called \emph{arrivals} (corresponding to customers, ad slots, workers), are sequentially revealed one at a time. When a vertex $t\in T$ arrives, the set of edges $(i,t)\in E$ is revealed. Without loss of generality (w.l.o.g.), the vertices in $T$ arrive in the order of their index. 
After each arrival, the (online) algorithm must make an irrevocable decision to offer at most one (available/unmatched) vertex $i$ with an edge to $t$. Unmatched arrivals depart immediately and cannot be matched later. The goal is to maximize the total number of matches and the performance of online algorithm is compared against an optimal offline algorithm that knows the entire graph $G$ in advance. Formally, let 
$OPT(G)$ denote the optimal achievable by an offline algorithm and ${A}(G)$ the expected value achieved by a (possibly randomized) online algorithm $A$. The goal is to design an algorithm that maximizes the competitive ratio, 
\[\min_{G}\frac{A(G)}{OPT(G)}.\]

 \cite{kvv} proposed and analyzed the Ranking algorithm that orders all offline vertices in a random permutation and matches every online vertex to the highest ranked neighbor available. They showed that Ranking attains the best possible competitive ratio of $(1-1/e)$ for online bipartite matching (see \cite{baum,goel2,devanur} for simplified analysis). 

In many applications, matches may fail with non-zero probability and the success or failure of a match is only revealed after the match has been made. This \emph{post-allocation stochasticity} arises, for instance, in Internet advertising where with some probability a user may not click on the ads shown. In the popular \emph{pay-per-click} model, an advertiser pays only if the user actually clicks i.e., the match succeeds. Similarly, in personalized recommendations new revenue is generated only if the customer buys an offered product. 
\smallskip

\noindent \textbf{Online matching with stochastic rewards:} Motivated by the observations above, \cite{deb} introduced a generalization of online bipartite matching where the objective is to maximize the expected number of \emph{successful} matches. Each edge $(i,t)$ {\color{black} has a success probability} $p_{it}$ that is revealed online along with the edge. When an algorithm makes a match, the arrival accepts (\emph{successful} match or reward) with probability given by the corresponding edge. The success of a given match is independent of past ones and the outcome of the stochastic reward is revealed only after the match is made. An unsuccessfully matched arrival departs immediately, leaving the resource available for future re-match. 
A natural generalization of this problem allows \emph{vertex-weighted rewards} where each resource $i\in I$ has a (possibly different) reward $r_i$ \footnote{If rewards are allowed to depend on both the arrival and resource the problem becomes much harder. It was shown in \cite{goel} that in this case no algorithm with competitive ratio better than $O\big(\frac{1}{|T|}\big)$ is possible.}. This is the setting we study.

The introduction of stochastic rewards 
raises an interesting question regarding the nature of the offline benchmark to compare against online algorithms. In addition to knowing all edges and their probabilities in advance, should the offline algorithm know the realizations of the rewards too? Unsurprisingly, such a benchmark turns out to be too strong for any meaningful bound\footnote{Consider a single arrival with an edge to $n\geq 1$ identical resources. Each edge has a probability of success $1/n$. A benchmark that knows realizations of rewards attains expected reward at least $(1-1/e)$. However, no online algorithm can achieve a reward better than $1/n$.}. {\color{black} A natural alternative} is to compare against adaptive offline algorithms that know the arrivals, edges and edge probabilities in advance but must sequentially adapt to the outcomes of matches. In particular, rewards are realized only after matching decisions are made and offline can match any arrival at most once. Though offline knows all arrivals in advance, the sequence in which offline makes matching decisions gives rise to different benchmarks.
\begin{itemize}
	\item  \textbf{Clairvoyant}: An often used benchmark where the offline algorithm makes matching decisions in order of the arrival sequence i.e., vertex $t\in T$ is matched only after matching decisions for all vertices $t'<t$ have been made 
	 \citep{negin,will,chan,reuse,feng}.
	\item \textbf{Fully Offline}: Algorithm can adaptively choose an order for matching vertices in $T$, given the realized outcomes of previous matches \citep{brubach2,toronto}. 
\end{itemize}

Note that in the deterministic case (all probabilities are one) these benchmarks are identical since the order in which vertices are matched does not matter given knowledge of the entire arrival sequence. In general, there are merits to using either benchmark. On one hand, fully offline is a stronger benchmark than clairvoyant. On the other hand, clairvoyant benchmark generalizes quite naturally to other settings with post-allocation stochasticity. For instance, in case of \emph{reusable} resources where successfully matched resources can be re-matched after some stochastic duration \citep{reuse}, allowing offline to match in an order different from the order of arrivals does not lead to a meaningful benchmark. Due to this generalizability, our main focus in this paper is to compare against clairvoyant benchmark. Interestingly, we find that our results hold more strongly against fully offline as well. For simplicity, we refer to clairvoyant and fully offline collectively as \emph{offline}.

 Unfortunately, the performance of offline can be hard to characterize exactly. However, observe that an upper bound on offline is easily given by the optimum of the following LP (see \cite{will,negin} for a formal proof). 
\begin{eqnarray}
OPT(LP)=	&\max &\sum_{(i,t)\in E} p_{it}r_ix_{it}\nonumber \\
&s.t.\ &\sum_{t\in T} p_{it} x_{it} \leq 1\quad \forall i\in I \label{expectedconserve}\\
&& \sum_{i\in I} x_{it} \leq 1 \quad \forall t\in T\label{conserve}\\
&& 0\leq x_{it}\leq 1 \quad \forall (i,t)\in E,	\nonumber
\end{eqnarray} 
here $x_{it}$ is a decision variable corresponding to the fraction of resource $i$ matched to arrival $t$. Constraints \eqref{expectedconserve} enforce that the capacity constraint is satisfied in expectation, for every resource. We let the capacity of each resource be 1 w.l.o.g..\footnote{Results for the case of unit capacity of each resource generalize to arbitrary capacities. Given a setting with inventory $c_i$ for resource $i$, consider a unit inventory setting where in place of resource $i$ we now have $c_i$ resources each with inventory of 1 and arrivals that have edges to all $c_i$ vertices for every edge $(i,t)$ in the original instance. Offline algorithm knows all arrivals in advance and thus knows these copies represent the same resource. Therefore, the optimal offline value remains unchanged and an algorithm for the unit inventory case can be used for arbitrary inventory levels without loss in guarantee. } Similarly, constraints \eqref{conserve} enforce that the matched fractions of every arrival sum to at most 1. We call this the \emph{expectation} LP as the capacity constraints need only hold in expectation. Due to the simplicity of this LP, when evaluating competitive ratios against offline one often chooses to use the upper bound $OPT(LP)$.  
\smallskip

\noindent \textbf{Previous work:} \cite{negin} and \cite{mehta} showed that it is possible to obtain $1/2$ competitive guarantee against the expectation LP benchmark 
using {\color{black} deterministic greedy policies}. 
Finding an algorithm that is provably better than na\"ive greedy is typically the main objective in this class of online allocation problems~\citep{msvv, deb, survey}. 
This has proven to be a major challenge for the stochastic rewards setting. 
Progress has been made in special cases with remarkable new insights. 
In applications such as online-advertisement, the edge probabilities or \emph{click-through-rates} are sometimes quite small \citep{survey}. Consequently, this is a previously well studied case for the stochastic rewards problem. In particular, \cite{deb} found that for identical rewards and identical probabilities i.e., $r_i=1,\forall i\in I$ and $p_{it}=p\, \forall (i,t)\in E$: Assigning arrivals to available neighbors with the least number of failed matches in the past (called StochasticBalance) is 0.567 competitive as $p\to 0$. They also showed that Ranking is 0.534 approximate for arbitrary value of $p$. Complementing these lower bounds, they showed that no algorithm has a competitive ratio better than $0.621<(1-1/e)$ against the LP. This indicates that perhaps the problem is strictly harder than the classical setting of \cite{kvv}, where a $(1-1/e)$ guarantee is possible. 
\cite{mehta} studied the case of heterogeneous and vanishingly small edge probabilities and gave a 0.534 competitive algorithm (called SemiAdaptive). No results beating $1/2$ are known to us for other cases, or when the rewards $r_i$ are not identical, even for identical probabilities.

\subsection{Our Results}\label{ressum}
Our first result establishes the best possible guarantee of $(1-1/e)$ for a special case of online matching with stochastic rewards where every edge probability is \emph{decomposable} i.e., 
is a product of two terms corresponding to each of the two vertices\footnote{Note that similar special cases have been considered in other online matching settings.  
	For instance, \cite{charikar} consider a deterministic online matching problem where each edge has a weight $w_{it}$ and the goal is find a matching with maximum total weight. They consider the special case where $w_{it}=w_i\times w_{t},\, \forall (i,t)\in E$, and give a $(1-1/e)$ competitive algorithm for this setting.}.
\begin{theorem}\label{maint}
	For decomposable probabilities $p_{it}=p_ip_t\, \forall (i,t)\in E$, Algorithm \ref{rank} is $(1-1/e)$ competitive against offline (both clairvoyant and fully offline). This is the best possible competitive ratio guarantee.
\end{theorem}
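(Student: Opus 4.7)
The plan is a randomized primal-dual analysis in the spirit of Devanur et al.\ \cite{devanur}, executed against a strengthened benchmark that handles the post-allocation randomness, with the decomposability $p_{it}=p_ip_t$ entering crucially at the dual-feasibility step. The benchmark comes from the path-based formulation motivated in Section \ref{decomp}: for each realization of the stochastic match outcomes we track the adaptive offline policy's actions, sum expected rewards over realizations, and exploit that each arrival is offered at most once and each resource succeeds at most once along any realization. This yields an LP that upper-bounds both $OPT(C)$ and $OPT(F)$ and whose dual carries variables $\alpha_t$ (for arrivals) and $\beta_i$ (for resources) with constraints of the form $\beta_i + \alpha_t/p_{it} \geq r_i$ for every edge $(i,t)\in E$.

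Next, I run Algorithm \ref{rank} and define dual variables driven by the perturbations $\{y_i\}$. When $t$ arrives and the algorithm offers $i^*$, I set $\alpha_t$ proportional to $p_{i^*t}\, r_{i^*}(1-g(y_{i^*}))$ and accumulate a charge on $\beta_{i^*}$ proportional to $p_{i^*t}\, r_{i^*}\, g(y_{i^*})$, each scaled by $1/(1-1/e)$, with $\beta_{i^*}$'s charge credited only on the successful match of $i^*$ (so each resource is charged at most once). The per-offer expected increment of $\alpha_t+\beta_{i^*}$ equals $p_{i^*t}r_{i^*}/(1-1/e)$, which is exactly $1/(1-1/e)$ times the expected primal reward of that offer, giving $E\!\left[\sum_t \alpha_t + \sum_i \beta_i\right] = E[\mathrm{ALG}]/(1-1/e)$.

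The heart of the argument is verifying the dual constraint in expectation: for every edge $(i,t)$, $E[\beta_i + \alpha_t/p_{it}] \geq r_i$. Decomposability causes Algorithm \ref{rank}'s greedy rule to ``factor out'' $p_t$, so the algorithm effectively maximizes $p_i r_i(1-g(y_i))$ over available neighbors; therefore, whenever $i$ is available at $t$, the offered $i^*$ satisfies
\[
\frac{p_{i^*t}\,r_{i^*}(1-g(y_{i^*}))}{p_{it}} \;=\; \frac{p_{i^*}\,r_{i^*}(1-g(y_{i^*}))}{p_i} \;\geq\; r_i\bigl(1-g(y_i)\bigr).
\]
This reduces the expectation bound to the one-dimensional integral in $y_i$ familiar from the original Perturbed-Greedy analysis of Aggarwal et al.\ \cite{goel}: there is a critical threshold $y^*_i$ such that for $y_i \leq y^*_i$ the vertex $i$ is eventually successfully matched, contributing $\beta_i \propto r_i\, g(y_i)/(1-1/e)$, while for $y_i > y^*_i$, whatever $i^*$ is offered at $t$ contributes $\alpha_t/p_{it} \geq r_i(1-g(y^*_i))/(1-1/e)$. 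The choice $g(y)=e^{y-1}$ makes the resulting integral evaluate to exactly $r_i$.

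The main obstacle is formalizing the threshold/monotonicity argument in the presence of stochastic failures: without decomposability, perturbing $y_i$ cascades through the availability of other resources in a complex way, and even the event ``$i$ is available at $t$'' couples to $y_i$ via the success probabilities of past offers involving $i$. The path-based formulation neutralizes this by coupling the match outcomes to coin flips drawn independently of the perturbations at the start of the instance, so that on each sample path the algorithm's trajectory is monotone in $y_i$ just as in the deterministic case. Tightness of $(1-1/e)$ is immediate: taking $p_i=p_t=1$ recovers vertex-weighted classical online bipartite matching, for which $(1-1/e)$ is known to be the best possible competitive ratio against both $OPT(C)$ and $OPT(F)$ \cite{kvv,goel}.
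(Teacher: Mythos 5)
Your high-level strategy is the same as the paper's (path-based benchmark, randomized primal-dual in the style of Devanur et al., decomposability giving an arrival-independent ordering), and the tightness argument via $p_i=p_t=1$ is correct. But there are two genuine gaps.

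First, the form of the dual constraint you write down is wrong, and in a way that matters. You state the constraint as $E[\beta_i + \alpha_t/p_{it}] \geq r_i$, i.e., $E[\lambda_t] + p_{it}E[\theta_i] \geq (1-1/e)p_{it}r_i$ in the paper's notation. This is exactly the expectation-LP dual constraint, and Section~\ref{prelim} of the paper gives a one-edge counterexample showing it \emph{cannot} be satisfied by the natural dual setting for any $\alpha > 1/e$: because $\beta_i$ is credited only on success, its expectation picks up a factor $p_{it}$, and multiplying by another $p_{it}$ leaves a $p_{it}^2$ term that vanishes for small $p_{it}$. The whole point of the path-based formulation is that the constraint one actually needs is the \emph{conditional, correlated} inequality $E_{\bm{\omega}}[\lambda^\omega_t + \theta^\omega_{i}\onee(i,t)\mid \omega_{t-1}]\geq \alpha p_{it}r_i$ from Lemma~\ref{dual}: the factor $\onee(i,t)$ multiplying $\theta^\omega_i$ inside the expectation, and the conditioning on the prefix $\omega_{t-1}$, are what allow the dual variable to be correlated with the reward outcome and avoid the $p_{it}^2$ loss. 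Moreover, the PBP dual has exponentially many variables and constraints; it does not have per-edge constraints of the form $\beta_i + \alpha_t/p_{it}\geq r_i$, and the paper derives a usable surrogate only through a weak-duality lemma.

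Second, the threshold argument is handwaved at exactly the point where the real difficulty lies. You assert that ``the path-based formulation neutralizes this by coupling the match outcomes to coin flips drawn independently of the perturbations, so that on each sample path the algorithm's trajectory is monotone in $y_i$ just as in the deterministic case.'' This is false even on a fixed sample path. As the paper's proof of Lemma~\ref{theta} points out, decreasing $y_i$ can cause $i$ to be \emph{unsuccessfully} matched to an earlier arrival $t'$, which frees a different resource $j$ that was previously successfully matched at $t'$; at arrival $t$ the algorithm may then offer $j$ rather than $i$ even though $i$ is still available. Ruling this out requires the alternating-path argument in Lemma~\ref{theta}, which is where decomposability is actually used (to propagate the inequality $p_{i}r_i(1-g(y_i))\geq p_{i'}r_{i'}(1-g(y_{i'}))$ along the alternating path between the two runs). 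You also phrase the threshold claim as a deterministic statement (``for $y_i\leq y^*_i$ the vertex $i$ is eventually successfully matched''), which cannot hold path-by-path since success is random; the correct statement is the conditional-expectation identity $E[\theta^Y_i\onee(i,t)\mid\omega_{t-1}]=p_{it}r_i g(y_i)$ for $y_i<y^c_i$, whose proof is precisely the step you have skipped.
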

Algorithm \ref{rank} computes a randomly perturbed per-unit reward for every resource and makes each match with the goal of maximizing the expected perturbed reward from the match. The algorithm is formally presented in Section \ref{sec:decomp}. Observe that the case of decomposable probabilities (with otherwise arbitrary magnitudes) includes the case of identical probabilities $p_{it}=p\, \forall (i,t)\in E$, for which \cite{deb} showed that there is no $0.621<(1-1/e)$ competitive algorithm when comparing with $OPT(LP)$. Theorem \ref{maint} establishes that this upper bound is largely due to the limitations of the expectation LP benchmark (see Section \ref{sec:badLP}). 
To obtain a tighter bound on offline we introduce a novel and intuitive path based program. In contrast to the expectation LP, this program has a constraint for every possible sample path instance of the stochastic reward process, along with constraints that enforce the independence of offline's decisions from the outcome of stochastic reward. This allows us to achieve the best possible guarantee of $(1-1/e)$ with a natural generalization of an algorithm from deterministic online matching. 


Let $p=\max_{(i,t)\in E} p_{it}$, denote the maximum edge probability in a given instance. For general (not necessarily) decomposable probabilities, we establish the following guarantee. 
\begin{theorem}\label{mainh}
There exists a constant $\eta>0$ such that	Algorithm \ref{fullada} is $(1-\eta\,\sqrt{p})\, 0.596$ competitive against offline for $p\leq 0.38$. 
\end{theorem}
Similar to Algorithm \ref{rank}, Algorithm \ref{fullada} matches each arrival greedily based on perturbed rewards. However, the perturbations are deterministic and change over time. The algorithm is formally presented in Section \ref{sec:vanish}. Observe that, for instances with vanishingly small edge probabilities i.e., $p\to 0$, the theorem implies a guarantee of 0.596. 
Recall that the importance of the special case of small probabilities stems from the observation that in applications such as online-advertisement, the probabilities or \emph{click-through-rates} are often quite small \citep{survey}. 
The path based program that we introduce to prove Theorem \ref{maint} appears to be insufficient for analyzing non-decomposable probabilities even when they are vanishingly small. To show Theorem \ref{mainh}, we develop a novel LP free analysis technique that bypasses the need for an LP based upper bound and uses path-based arguments to directly compare online algorithms with offline. 

\begin{table}
	\centering
	\begin{tabular}{|l|l|l|l|}
		\cline{2-4}
		\multicolumn{1}{l|}{}   & \multirow{2}{*}{Identical rewards} & \multicolumn{2}{c|}{ Vertex-weighted rewards}                                                                       \\
		\cline{3-4}
		\multicolumn{1}{c|}{}  	&                                    & $c_i=1$ & \multicolumn{1}{c|}{$c_i\to \infty$}  \\ 
		\hline
		$p_{it}=p\to 0$ & $0.567 \to  \pmb{(1-1/e)}$      & \multirow{2}{*}{$0.5 \to \pmb{(1-1/e)}$}  & \multirow{4}{*} {$(1-1/e) $}\\
		\cline{1-2}                              
		$p_{it}=p_i p_t$              & $0.500\to \pmb{(1-1/e)}$      &  & \\     
		\cline{1-3}                                  
		$p_{it}\to 0$   & $0.534 \to  \pmb{0.596}$                                &   $0.5 \to  \pmb{0.596}$                         &  \\
		\cline{1-3}                                
		General $p_{it}$               & 1/2 
		& 1/2 
		&      \\		\hline                            
	\end{tabular}
	\caption{Summary of best known competitive ratio guarantees in relevant settings. Here $c_i$ represents the number of copies of each vertex $i\in I$ available, with $c_i=1$ being the most general case. Entries in bold represent our contributions. }
	\label{tab}
\end{table}

A well studied regime for online resource allocation problems, including the stochastic rewards setting, is when we have a large number of copies $c_i$ for every vertex $i\in I$. 
In this \emph{large capacity regime}, previous results show that one can achieve the best possible $(1-1/e)$ guarantee against expectation LP \citep{negin,msvv}. \emph{Why does the expectation LP benchmark admit this guarantee only for large inventory?} 
{\color{black} To answer this question, 
in Section \ref{asymptote} we show that when $c_i\to +\infty\, \, \forall i\in I$, the expectation LP and offline benchmark converge to the same optimal value.} 

\subsection{Related Work}\label{related}
We briefly review the vast body of literature on online resource allocation, including work we discussed previously to put things into perspective. \cite{kvv} introduced the online bipartite matching problem and proposed the optimal $(1-1/e)$ competitive Ranking algorithm. \cite{baum}, \cite{goel2} considerably simplified (and completed) the original analysis. Subsequently, \cite{goel}, proposed the Perturbed-Greedy algorithm and showed that it is $(1-1/e)$ competitive more generally for the case of online vertex-weighted bipartite matching. \cite{devanur} gave an elegant and intuitive randomized primal-dual framework for proving these results. Their framework applies to and simplifies several other settings, such the related AdWords problem of \cite{msvv} (and the earlier result for $b$-matchings by \cite{pruhs}). There have also been a series of results on different types of stochastic arrival models, where there is distributional information in the arrivals that can be exploited for better results (for example, \cite{devhay}), 
as well as simultaneous guarantees in adversarial and stochastic arrival settings (for instance, \cite{mirrokni}). For a detailed survey of these models and results we refer the reader to a monograph by \cite{survey}. 

Moving on to settings with stochastic rewards, \cite{deb} introduced the model we consider here and focused on the case where all rewards and edge probabilities are identical. They showed the first results beating $1/2$ (and achieving 0.534) for this setting and motivated the case of small edge probabilities. \cite{negin}, 
considered a broad generalization of stochastic rewards where one offers \emph{a set or an assortment of resources} to each arrival. The arrival then chooses at most one resource from this set based on a \emph{choice model} that is revealed online at the time of arrival. With the objective of maximizing total expected reward, they showed that when the number of copies (inventory) of each resource approaches $+\infty$, an \emph{inventory balancing} algorithm\footnote{A generalization of the algorithm for AdWords in \cite{msvv}} is asymptotically $(1-1/e)$ competitive. For the most general case of unit inventory, 
their algorithm reduces to the Greedy algorithm and is $1/2$ competitive. Of course, this also establishes a $1/2$ competitive result for stochastic rewards with arbitrary vertex-weights and edge probabilities. 
The $1/2$ competitiveness of a non-adaptive Greedy algorithm for identical rewards was also shown independently by \cite{mehta}. 
They also gave a 0.534 competitive algorithm for identical rewards and heterogeneous but vanishingly small edge probabilities. Note that in contrast to \cite{deb} and \cite{mehta}, \cite{negin} evaluate competitiveness against the clairvoyant benchmark. 
However, for the sake of analysis they resort to an upper bound on the performance of clairvoyant and use the expectation LP. Clairvoyant algorithms are a commonly used benchmark in various other settings that include the element of stochastic rewards and more generally, other forms of post-allocation stochasticity (for instance, see \cite{will,chan,reuse,reuseopt}).

Since the online/arXiv version of this paper first appeared, there have been further developments of interest. In concurrent work, \cite{brubach2} consider at a 
setting where for each arrival a finite number of \emph{rematches} are allowed in case of unsuccessful matches. They observe that in this case the problem for a single arrival is optimally solved by a dynamic program (DP). Using this insight show that the DP based greedy algorithm is 0.5 competitive. Further, they highlight several limitations of the LP benchmark and advocate the use of fully offline as a more natural benchmark for future work. Subsequently, \cite{huang} 
gave a 0.572 competitive for heterogeneous and strongly vanishing probabilities ($p=1/f(n)$, where $f(n)$ is a strictly increasing function of the number of resources $n$). Their result is shown against the expectation LP benchmark. Thus, while their benchmark is stronger than offline the stronger assumption on smallness of probabilities and different numerical guarantee (0.572 in \cite{huang} vs.\ $1-1/e$ and $0.596$ here), renders their results incomparable with ours. 
Very recently, \cite{unkads} analyzed a variant of Algorithm \ref{rank} for an online allocation setting with unknown resource capacities. A byproduct of his result is that Algorithm \ref{rank} is at least 0.522 competitive for stochastic rewards with vanishing edge probabilities. 

There is also a stream of work that looks at the aspect of stochastic rewards with stochastic arrivals and other related (including offline) versions of the problem. In particular, \cite{bansal} were the first to study stochastic rewards in the context of bipartite matching. They considered an offline \emph{probe-commit} version of the problem where all edges and probabilities are known in advance and one must probe edges with the restriction that if a probed edge exists, it must be included in the matching. 
We refer the reader to \cite{brubach2} and \cite{toronto} for a more detailed review of these settings.

\subsubsection*{Outline for rest of the paper:} 
We start by introducing notation in Section \ref{sec:prelim}. In Section \ref{sec:decomp}, we consider the case of decomposable probabilities and present Algorithm \ref{rank}. Section \ref{sec:badLP} discusses the challenges with using the randomized primal-dual framework of \cite{devanur} for the stochastic rewards problem. This motivates the consideration of our path based analysis in Section \ref{sec:pbp} and Section \ref{sec:pbdf}. In Section \ref{sec:vanish}, we consider the case of small (not necessarily decomposable) probabilities. We start by presenting Algorithm \ref{fullada}, followed by the proof of Theorem \ref{mainh} (Section \ref{sec:vanish1} to Section \ref{sec:fsb}). 
Section \ref{asymptote} completes the picture with a convergence result in the large inventory regime. Finally, in Section \ref{open} we conclude with a review of some relevant open problems. 

\section{Preliminaries} \label{sec:prelim}
Recall the problem definition. We have a bipartite graph $G$ with a set $I$ of $n$ offline vertices, also called resources. An arbitrary number of vertices $t\in T$ arrive online. We use the index of arrivals $t$ to also denote their order in time. So assume vertex $t\in T$ arrives at time $t$. All edges $(i,t)$ incident on vertex $t$ are revealed when $t$ arrives, along with a corresponding probability of success $p_{it}$. On each new arrival, we must make an irrevocable decision to match the arrival to any one of the available offline neighbors. Once a match, say $(i,t)$, is made, it succeeds with probability $p_{it}$ independent of all other events. If the match succeeds, resource $i$ unavailable to future arrivals and leads to a reward $r_i$. The objective is to maximize the expected reward summed over all arrivals. Throughout the paper, we use \alg\ to denote the online algorithm under consideration as well its expected total reward. Similarly, \opt\ denotes the (fully) offline benchmark and its expected total reward. W.l.o.g., let \opt\ be a deterministic algorithm. 
\begin{lemma}\label{determ}
	The optimal fully offline and clairvoyant algorithms are deterministic and given by a (possibly exponential time) dynamic program.
\end{lemma} 
We provide a proof in Appendix \ref{appx:trunc}. While the ideas in this paper can be used for comparison with randomized benchmark algorithms, the fact that \opt\ is deterministic simplifies presentation of the key ideas.

For edge $(i,t)\in E$, let $\onee(i,t)$ be an indicator random variable that denotes the success/failure of the edge. So $\onee(i,t)$ equals 1 w.p.\ $p_{it}$. 
{\color{black} Recall that, the success/failure of an edge is independent of the outcome of other edges. Let $\omega$ denote a sample path over the success/failure of all edges. Formally, $\omega$ defines for every edge $(i,t)\in E$, the value $\onee(i,t)=\one(i,t)$ for the success/failure of $(i,t)$. Conditioned on $\omega$, any given (online or offline) algorithm observes the outcome $\one(i,t)$ after matching $i$ to $t$. Outcomes for edges not included in the matching are not revealed to the algorithm. } 
Let $\Omega$ denote the universe of all sample paths $\omega$.   For simplicity, we use $E_{\omega}[\cdot]$ to denote expectation over the randomness in outcomes of all edges.  
We use $\omitt$ to represent part of sample path $\omega$ that defines the success/failure of all edges except edge $(i,t)\in E$. In other words, $\omitt$ determines values of random variables $\onee(j,t')$, for all edges $(j,t')\neq (i,t)$. Overloading notation, we let $E_{\omega}[\cdot\mid \omitt]$ denote expectation over randomness in outcomes of edge $(i,t)$, with the outcomes of all other edges fixed according to $\omitt$. Let $E_{\omitt}[\cdot]$ denote expectation over the randomness in outcomes of all edges except $(i,t)$. 
\section{Decomposable Probabilities ($\mb{p_{it}=p_i\cdot p_t}$)}\label{sec:decomp}

Our first algorithm for online matching with stochastic rewards computes a randomly perturbed per-unit reward for every resource and makes each match with the goal of maximizing the expected perturbed reward from the match. 

\begin{algorithm}[H]
	\SetAlgoNoLine
	$S = I$, $g(t)=e^{t-1}$\;
	For every $i\in I$ generate i.i.d.\ r.v.\ $y_i\in U[0,1]$\;
	\For{\text{every new arrival } $t$}{
		$i^*=\underset{i\mid (i,t)\in E, i\in S}{\arg\max} p_{it}r_i (1-g(y_i))$\;
		\textbf{offer} $i^*$ to $t$\;
		\textbf{If} {t accepts $i^*$} \textbf{update} $S=S\backslash\{i^*\}$\;			}

	\caption{Generalized Perturbed-Greedy}
	\label{rank}
\end{algorithm}
Algorithm \ref{rank} matches greedily w.r.t.\ expected \emph{reduced} rewards $p_{it} r_i(1-g(y_i))$. We refer to random variable $y_i$ as the \emph{seed} of resource $i$. The choice of scaling function $g(\cdot)$ naturally influences the performance of the algorithm. We let the analysis dictate this choice. As it turns out, $g(x)=e^{x-1}$ achieves the best possible performance guarantee of $(1-1/e)$.

 When $p_{it}=1\,\; \forall (i,t)\in E$, Algorithm \ref{rank} is exactly the Perturbed-Greedy algorithm of \cite{goel}. \cite{devanur} show that Perturbed-Greedy is $(1-1/e)$ competitive for deterministic online matching using the primal-dual approach. To analyze Algorithm \ref{rank} for stochastic rewards, we first explore a similar primal-dual analysis with the expectation LP benchmark. Since Algorithm \ref{rank} offers a natural extension of Perturbed-Greedy, at the outset it might even seem that the analysis in \cite{devanur} may generalize directly.  
 In the next section, we demonstrate the challenge with this approach through a simple example. Subsequently, we introduce a stronger benchmark, called the path based program (PBP), that we use to show a tight guarantee for Algorithm \ref{rank} against offline. Finally, in Section \ref{sec:obs} we discuss (through an example) the challenges with extending the analysis to non-decomposable probabilities.
\subsection{Challenges with using Expectation LP Benchmark}\label{sec:badLP}

Let \opt\ denote the expected reward of offline. In absence of the stochastic component (all probabilities one), the offline and LP converge and we have $\opt=OPT(LP)$. In general, it is known that there are simple instances with a gap of up to $(1-1/e)$ between $\opt$ and $OPT(LP)$ i.e., $OPT(LP)$ can be as large as $(1-1/e)^{-1}\opt$.\footnote{Consider one unit of a single resource with unit reward and $m$ identical arrivals that have an edge to the resource. Suppose that all edge probabilities are identically equal to $1/m$. The best that any offline algorithm can do is to match each arrival as long as the resource is still available. While this has expected reward $(1-1/e)$ for $m\to +\infty$, the optimal value of LP equals 1 \citep{brubach2,deb}.}
However, the gap between $OPT(LP)$ and $\opt$ does not preclude the possibility of showing a strong competitive ratio guarantee against $OPT(LP)$. After all, a $1/2$ competitive result is known against the expectation LP benchmark in even more general settings \citep{negin}. 

The LP based primal-dual framework of \cite{buchbind} and \cite{devanur} is, arguably, the most versatile and general technique for proving guarantees for online matching and related problems. 
For the stochastic rewards problem, there are fundamental obstacles to using this technique with the expectation LP as a benchmark, and previous work explores novel approaches instead \citep{deb,mehta}. Let us understand one such hurdle from the context of the framework in \cite{devanur}. We start with the dual of the expectation LP,
\begin{eqnarray}
	&\min &\sum_{t\in T}\lambda_t +\sum_{i\in I} \theta_{i} \nonumber\\
	&s.t.\ & \lambda_{t} +p_{it}\theta_i \geq p_{it}r_i\quad \forall (i,t)\in E\label{dualmain}\\
	&& \lambda_t, \theta_{i}\geq 0 \quad \forall t\in T,i\in I\nonumber	
\end{eqnarray} 
\noindent \textbf{Primal-dual certificate~\citep{devanur}:} To prove $\alpha$ competitiveness for an online algorithm with expected reward \alg, it suffices to find a set of non-negative values $\lambda_t,\theta_i$ such that,
\begin{enumerate}[(i)]
	\item$\lambda_t+p_{it}\theta_i\geq\alpha p_{it} r_i,\quad \forall (i,t)\in E,$
	\item $ \alg\geq \sum_{t\in T}\lambda_t + \sum_{i\in I} \theta_i,$ 
\end{enumerate} 

Now, consider Algorithm \ref{rank} and let \alg\ denote both the algorithm and its expected reward. Inspired by \cite{devanur}, we consider the following path based setting of dual variables:

\emph{Given a seed $Y=(y_i)_{i\in I}$ (used in Algorithm \ref{rank}) and sample path $\omega$ over the edge outcomes, when the algorithm offers vertex $i$ to arrival $t$, set $\lambda_t^{\omega}(y_i)= r_i(1-g(y_i))\one(i,t)$ and increase the value of  $\theta^{\omega}_i(y_i)$ by $r_ig(y_i)\one(i,t)$. Let all unset variables be 0.} 

It can be shown that the sum $\sum_t \lambda_t^{\omega}(y_i)+\sum_i \theta^{\omega}_i(y_i)$ equals the total reward of the algorithm on every sample path given by $Y,\omega$. Let $E_{\omega,{Y}}[\cdot]$ denote the expectation w.r.t.\ randomness in ${Y}$ as well as $\omega$. Define $\hat{\lambda}_t=E_{\omega,{Y}}[\lambda^{\omega}_t(y_i)]$ and $\hat{\theta}_i=E_{\omega,{Y}}[\theta^{\omega}_i(y_i)]$. Then, to prove $(1-1/e)$ competitiveness of Algorithm \ref{rank}, it suffices to show that 
that $\{\hat{\lambda_t}\}_{t\in T}$ and $\{\hat{\theta}_i\}_{i\in I}$ satisfy constraint (i) with $\alpha=(1-1/e)$ for every 
edge $(i,t)$. When all edge probabilities are identically 1, for $g(x)=e^{x-1}$, this follows directly from Lemma 2.4 in \cite{devanur}. 
Unfortunately, this is not true for non-unit probabilities. 
Consider the simplest possible example where the first arrival $t=1$ has an edge to resource $i$ and no other edges are incident on $i$ in the future. In this case, \alg\ always matches $i$ to $t$ and we have,
\[ \hat{\lambda}_t=p_{it} r_i E_{y_i}[1-g(y_i)], \text{ } \hat{\theta}_i=p_{it}r_i E_{y_i}[g(y_i)]. \]
Thus,
	\[\hat{\lambda}_t+p_{it}\hat{\theta}_i=p_{it}r_iE_{y_i}[1-g(y_i)]+p^2_{it}r_iE_{y_i}[ g(y_i)].\] 
For small $p_{it}$, any term with $p_{it}^2$ can be ignored and therefore, the expectation approaches $p_{it}r_iE_{y_i}[(1-g(y_i))]$. 
Consequently, for the natural trade-off function $g(x)=e^{x-1}$, we have, $\hat{\lambda}_t+\hat{\theta}_i\leq \frac{1}{e}p_{it}r_i$. 

The main problem stems from fact that the LP only imposes capacity constraints in expectation, giving rise to a $p_{it}^2$ term as we see above. One may consider other ways of setting dual values, such as the one in \cite{buchbind} or \cite{negin}, however the resulting analysis faces similar issues. 
Our path based analysis circumvents this problem by imposing path-based dual constraints.
\subsection{Path Based Program (PBP)}\label{sec:pbp}
We start by constructing a tighter upper bound on offline via a path based formulation. Since \opt\ is deterministic (Lemma \ref{determ}), we consider binary decision variables $x^{\omega}_{it}\in\{0,1\}\,\; \forall (i,t)\in E, \omega\in \Omega$. On a given sample path $\omega$, let $x^{\omega}_{it}$ equal 1 if offline matches $i$ to $t$, and 0 otherwise. Then, the following must be satisfied for every $\omega$,
\begin{eqnarray}
&&   \sum_{t\mid (i,t)\in E} x^\omega_{it}\one(i,t)\leq 1\quad \forall i\in I  \label{cap}\\
&&\sum_{i\mid (i,t)\in E} x^\omega_{it}\leq 1 \quad \forall t\in T  \label{match}
\end{eqnarray}
Constraints~\eqref{cap} capture the fact that any resource $i$ is used at most once on $\omega$. This is in contrast to the LP earlier, where this condition was imposed only in expectation over all sample paths. Similarly, constraints \eqref{match} capture that $t$ is matched to at most one vertex on every sample path. {\color{black} Recall that, offline 
does not know the outcome of an edge before it is chosen to be in the matching.  
Thus, the matching decision for arrival $t$ 
must not depend on the value of $\onee(i,t)$ but may depend on the value of $\onee(j,t')$ for $t'\neq t$. 
We capture this structure with the constraints,}
\begin{equation}
x^{\omega}_{it}=x^{\omega'}_{it}\quad \forall\, \omega,\omega'\in\Omega \text{ such that $ \omitt=\omitt'$.}\label{indep}
\end{equation}
Using \eqref{cap}, \eqref{match} and \eqref{indep}, we now formulate a path based LP relaxation with the objective of maximizing the total expected reward.  
\begin{eqnarray*}
	&\text{PBP}: \max_{} &E_{ {\omega}}\big[\sum_{(i,t)\in E} r_ix^{\omega}_{it}\one(i,t)\big]\nonumber \\
	&\text{s.t. } & \sum_{t\mid (i,t)\in E} x^\omega_{          it}\one(i,t)\leq 1\quad \forall i\in I,\omega\in \Omega\\
	&&\sum_{i\mid (i,t)\in E} x^\omega_{it}\leq 1 \quad \forall t\in T,\omega\in \Omega\\
	&&x^{\omega}_{it}=x^{\omega'}_{it}\quad \forall\, (i,t)\in E,  \{\omega,\omega'\in\Omega\mid  \omitt=\omitt'\}\\
	&&  0\leq x^\omega_{it}\leq 1\quad \forall (i,t)\in E,\omega\in \Omega. \nonumber
\end{eqnarray*}

\noindent Since offline must satisfy all constraints in the program, the values $x^{\omega}_{it}$ generated by executing offline over sample paths $\omega\in\Omega$, yield a feasible solution for PBP. Therefore, the value of PBP offers an upper bound on the performance of the offline benchmark (both clairvoyant and fully offline). 
The natural next step in developing a primal-dual argument is to construct a feasible solution for the dual of PBP. The dual now has exponentially many variables and constraints. Instead of dealing with those directly, the following lemma establishes a weak duality result with a simpler set of constraints i.e., finding a feasible solution for these constraints is easier and more natural. 


\begin{lemma}\label{dual}
	Consider non-negative variables $\lambda^\omega_t, \theta^\omega_{i}$ such that the following inequality holds for every edge $(i,t)$, partial sample path $\omitt$, and some constant $\alpha>0$, \[E_{{\omega}}[\lambda^\omega_t + \theta^\omega_{i}\one(i,t)\mid \omitt]\geq \alpha p_{it}r_i.\] Then, for every feasible solution $\{x^\omega_{it}\}$ of PBP, 
	\[\frac{1}{\alpha} E_{{\omega}}\big[\sum_{t\in T} \lambda^\omega_t +\sum_{i\in I} \theta^\omega_{i}\big]\geq \sum_{(i,t)\in E}p_{it}r_i E_{{\omega}}[ x^\omega_{it}]. \]
	Therefore, $ E_{{\omega}}\big[\sum_{t\in T} \lambda^\omega_t +\sum_{i\in I} \theta^\omega_{i}\big]\geq \alpha\, OPT(\text{PBP})$.
\end{lemma}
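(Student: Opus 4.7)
The plan is a standard weak-duality-style argument, adapted to the path-based setting: multiply the primal capacity and matching constraints by the corresponding non-negative dual variables, sum, and take expectation; then lower-bound the resulting quantity using the hypothesis by conditioning on $\omega_{t-1}$ and invoking the measurability (non-anticipation) constraint on $x^{\omega}_{it}$.

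Concretely, I would first fix any feasible $\{x^\omega_{it}\}$ for PBP. For each sample path $\omega$, multiply the resource constraint $\sum_{t\,:\,(i,t)\in E} x^\omega_{it}\onee(i,t)\leq 1$ by the non-negative $\theta^\omega_i$ and sum over $i$, and multiply the matching constraint $\sum_{i\,:\,(i,t)\in E} x^\omega_{it}\leq 1$ by $\lambda^\omega_t$ and sum over $t$. Adding the two inequalities and taking expectation over $\omega$ gives
\[
E_{\bm\omega}\Bigl[\sum_{(i,t)\in E} x^\omega_{it}\bigl(\lambda^\omega_t+\theta^\omega_i\onee(i,t)\bigr)\Bigr]\;\le\; E_{\bm\omega}\Bigl[\sum_t\lambda^\omega_t+\sum_i\theta^\omega_i\Bigr].
\]

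The main step is to lower-bound the left-hand side. Here the crucial use of the PBP formulation is the non-anticipation constraint $x^\omega_{it}=x^{\omega'}_{it}$ whenever $\omega_{t-1}=\omega'_{t-1}$: it says that $x^\omega_{it}$ is a function of $\omega_{t-1}$ only, so it is measurable with respect to the sigma-algebra generated by $\omega_{t-1}$. Applying the tower property of conditional expectation and pulling $x^\omega_{it}\ge 0$ outside,
\[
E_{\bm\omega}\bigl[x^\omega_{it}(\lambda^\omega_t+\theta^\omega_i\onee(i,t))\bigr]
= E_{\bm\omega}\bigl[x^\omega_{it}\,E_{\bm\omega}[\lambda^\omega_t+\theta^\omega_i\onee(i,t)\mid \omega_{t-1}]\bigr]
\;\ge\; \alpha p_{it}r_i\,E_{\bm\omega}[x^\omega_{it}],
\]
where the inequality uses the hypothesis of the lemma edge by edge. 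Summing over $(i,t)\in E$ and chaining with the previous display yields
\[
E_{\bm\omega}\Bigl[\sum_t\lambda^\omega_t+\sum_i\theta^\omega_i\Bigr]\;\ge\;\alpha\sum_{(i,t)}p_{it}r_i\,E_{\bm\omega}[x^\omega_{it}],
\]
which is the desired inequality. Maximizing the right-hand side over feasible $\{x^\omega_{it}\}$ immediately gives $E_{\bm\omega}[\sum_t\lambda^\omega_t+\sum_i\theta^\omega_i]\ge\alpha\,\mathrm{OPT}(\text{PBP})$.

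The only subtle point, and what I expect to be the main conceptual obstacle for a reader, is the measurability step: unlike in the classical LP duality where one simply swaps sums, here one must justify factoring $x^\omega_{it}$ through the conditional expectation, and this is precisely what the independence constraints \eqref{indep} of PBP are designed to enable. Everything else is routine rearrangement and non-negativity of the dual variables.
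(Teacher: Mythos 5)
Your weak-duality argument is correct and follows essentially the same route as the paper: multiply the capacity constraint~\eqref{cap} by $\theta^\omega_i\ge 0$ and the matching constraint~\eqref{match} by $\lambda^\omega_t\ge 0$, sum, take expectation, then use the tower property together with the non-anticipation constraint~\eqref{indep} to factor $x^\omega_{it}$ out of the conditional expectation $E_{\bm\omega}[\cdot\mid\omega_{t-1}]$ and apply the hypothesis edge by edge. You also correctly identify that the measurability of $x^\omega_{it}$ with respect to $\omega_{t-1}$ is the conceptual crux.

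However, your last sentence (``Maximizing the right-hand side over feasible $\{x^\omega_{it}\}$ immediately gives $\ge\alpha\,\mathrm{OPT}(\text{PBP})$'') has a small gap. The objective of PBP is $E_{\bm\omega}\big[\sum_{(i,t)} r_i x^\omega_{it}\one(i,t)\big]$, not $\sum_{(i,t)} p_{it}r_i E_{\bm\omega}[x^\omega_{it}]$. To conclude, you must show these two quantities are equal for every feasible solution, and this is not a rearrangement of sums: you again need the tower property plus constraint~\eqref{indep} to pull $x^\omega_{it}$ out of the conditional expectation on $\omega_{t-1}$, and then the fact that $\onee(i,t)$ is independent of $\omega_{t-1}$, so $E_{\bm\omega}[\onee(i,t)\mid\omega_{t-1}]=p_{it}$. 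In other words, the same non-anticipation argument you invoked for the feasibility inequalities is needed once more to identify the PBP objective with $\sum_{(i,t)}p_{it}r_iE_{\bm\omega}[x^\omega_{it}]$; the paper carries out this computation explicitly at the end of its proof. With that added, your argument is complete.
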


\begin{proof}{Proof.}
	Fix an arbitrary feasible solution $\{x^\omega_{it}\}$, then for 
	any given sample path $\omega$ we have,
	\[\sum_{t\in T} \lambda^\omega_t \sum_{i\mid (i,t)\in E} x^\omega_{it}+  \sum_{i\in I} \theta^\omega_{i}\sum_{t\mid (i,t)\in E}  x^\omega_{it} \one(i,t)\leq \sum_{t\in T} \lambda^\omega_t + \sum_{i\in I} \theta^\omega_{i},\]
	here we multiplied inequality \eqref{cap} for each $i$ by $\theta^\omega_{i}\geq0$, inequality \eqref{match} for each $t$ by $\lambda^\omega_t\geq0$, and summed over all $i$ and every $t$. 
	We re-write the inequality above as,
	\begin{eqnarray*}
		 \sum_{t\in T} \lambda^\omega_t +\sum_{i\in I} \theta^\omega_{i}\, \geq \sum_{(i,t)\in E} x^\omega_{it} \lambda^\omega_t + \sum_{(i,t)\in E} x^\omega_{it} \theta^\omega_{i}\one(i,t).
	\end{eqnarray*}
	Taking expectation (convex combination of linear constraints) over $\omega$, 
{\color{black}	\begin{eqnarray*}
		E_{{\omega}}\big[\sum_{t\in T} \lambda^\omega_t +\sum_{i\in I} \theta^\omega_{i}\big]&&\geq E_{{\omega}}\big[\sum_{(i,t)\in E} x^\omega_{it}(\lambda^\omega_t+  \theta^\omega_{i}\one(i,t))\big]\\
		&&= \sum_{(i,t)\in E} E_{\omitt}\big[ E_{{\omega}}[x^\omega_{it}(\lambda^\omega_t+ \theta^\omega_{i}\one(i,t))\mid \omitt]\big]\\
		&&= \sum_{(i,t)\in E} E_{\omitt}\big[x^\omega_{it}\, E_{{\omega}}[(\lambda^\omega_t+ \theta^\omega_{i}\one(i,t))\mid \omitt]\big]\\
		&&\geq \alpha p_{it}r_i\sum_{(i,t)\in E}E_{{\omega}}[ x^\omega_{it}],
	\end{eqnarray*}}
{\color{black}	here the first equation follows from the tower rule of expectation. The second equation follows from the fact that conditioned on $\omitt$, $x^\omega_{it}$ is deterministic (condition \eqref{indep}).} The final inequality follows from, $E_{{\omega}}[\lambda^\omega_t + \theta^\omega_{i}\one(i,t)\mid \omitt]\geq \alpha p_{it}r_i$. To finish the proof consider the objective of PBP.
	\begin{eqnarray*}
		OPT(\text{PBP})=	E_{{\omega}}\big[\sum_{(i,t)\in E} r_ix^{\omega}_{it}\one(i,t)\big]&&=\sum_{(i,t)\in E} E_{\omitt}\big[ E_{{\omega}}\big[r_ix^{\omega}_{it}\one(i,t)\mid\omitt\big]\big]\\
		&&=\sum_{(i,t)\in E}E_{\omitt}\big[ r_ix^{\omega}_{it}E_{{\omega}}[\one(i,t)\mid\omitt]\big]\\
		&&=\sum_{(i,t)\in E}E_{\omitt}\big[ p_{it} r_ix^{\omega}_{it}\big]=\sum_{(i,t)\in E} p_{it} r_iE_{{\omega}}[ x^{\omega}_{it}],
	\end{eqnarray*}
	here the first equation follows form the tower rule of expectation, the second from condition \eqref{indep}, the third equation follows from the independence of $\onee(i,t)$ from $\omitt$ and the final equation also from condition \eqref{indep}.

\hfill\Halmos
\end{proof}
\subsection{Path Based Dual Fitting}\label{sec:pbdf}
Let \alg\ denote Algorithm \ref{rank} as well as its total expected reward. Armed with Lemma \ref{dual}, we now focus on finding `dual' values $\lambda^\omega_t\geq 0$ and $\theta^\omega_{it}\geq 0$ such that,
\begin{eqnarray}
E_{{\omega}}[\sum_{t\in T} \lambda^\omega_t +\sum_{i\in I} \theta^\omega_{i}]&&=\alg, 
\label{rw}\\
E_{{\omega}}[\lambda^\omega_t + \theta^\omega_{i}\one(i,t) \mid\omitt]&&\geq \alpha \,\,p_{it}r_i\qquad \forall (i,t)\in E, \omitt\in \Omega.\label{fsb}
\end{eqnarray} 
If we can find dual values that satisfy the system given by these linear constraints, then using Lemma \ref{dual} we that \alg\ is  $\alpha$ competitive against offline. We start by defining a candidate solution for the system based on the decisions of Algorithm \ref{rank}, followed by a proof of feasibility. This process is called dual fitting \citep{survey}.

Given the fact that we have external randomness in \alg\ due to seeds $\mb{ Y}=({y}_i)_{i\in I}$, we first define variables $\lambda^{\omega,Y}_t$, $\theta^{\omega,Y}_i$ and then set $\lambda^{\omega}_t=E_{Y}[\lambda^{\omega,Y}_t]$ and $\theta^{\omega}_i=E_{Y}[\theta^{\omega,Y}_i]$. To that end, we couple the stochastic rewards in \alg\ and PBP as follows.
\smallskip

\noindent \textbf{Coupling between \alg\ and PBP:}	Given sample path $\omega\in \Omega$ we consider concurrent execution of \alg\ and PBP with match successes/failures given by $\omega$ i.e., both \alg\ and PBP see the same values $\one(i,t),\,\forall (i,t)\in E$. 

Now, consider a path based generalization of the dual setting in \cite{devanur}.  
Initialize all dual variables to 0 and for any match $(i,t)$ in \alg\ set, 
\begin{equation}
\lambda^{\omega,Y}_t=r_i(1-g(y_i))\one(i,t) \text{ and increment } \theta^{\omega,Y}_i \text{ by } r_ig(y_i)\one(i,t).\label{dualset}
\end{equation}
Clearly, $\lambda^{\omega,Y}_t$ is set uniquely since \alg\ offers at most one resource $i$ to arrival $t$. Also, $\theta^{\omega,Y}_i$ takes a non-zero value only if it is also accepted by some $t$ and if this occurs, $\theta^{\omega,Y}_i$ is never re-set. 
We have the following as an immediate consequence of this setting of dual values.
\begin{lemma}\label{dualreward}
	For the candidate solution given by \eqref{dualset}, equality \eqref{rw} is satisfied. 
\end{lemma}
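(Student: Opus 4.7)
The proof will be essentially a bookkeeping argument on each realization $(\omega, Y)$: I will show that the pointwise sum $\sum_t \lambda^{\omega,Y}_t + \sum_i \theta^{\omega,Y}_i$ exactly equals the total reward collected by \alg\ on that realization, and then take expectations on both sides.

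The plan is first to fix a sample path $\omega$ and a seed vector $Y$, and enumerate the matches that \alg\ attempts on $(\omega, Y)$. For each attempted match $(i,t)$, the update rule \eqref{dualset} contributes $r_i(1-g(y_i))\one(i,t)$ to $\lambda^{\omega,Y}_t$ and adds $r_i g(y_i)\one(i,t)$ to $\theta^{\omega,Y}_i$. Two well-definedness checks will be needed here: (i) $\lambda^{\omega,Y}_t$ is assigned at most once because \alg\ offers at most one resource to each arrival $t$, and (ii) although $\theta^{\omega,Y}_i$ is written as an increment, $i$ receives a nonzero update only on a successful match ($\one(i,t)=1$), after which $i$ is removed from $S$ and can never be matched again. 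Thus each vertex contributes exactly once with a nonzero value.

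Summing the contributions of a single attempted match $(i,t)$ gives $r_i(1-g(y_i))\one(i,t) + r_i g(y_i)\one(i,t) = r_i\one(i,t)$, which is precisely the reward \alg\ earns from that match on this realization (zero if the match fails, $r_i$ if it succeeds). Summing over all matches that \alg\ makes on $(\omega, Y)$,
\[
\sum_t \lambda^{\omega,Y}_t + \sum_i \theta^{\omega,Y}_i \;=\; \text{Total reward of \alg\ on } (\omega,Y).
\]
Taking expectation over $Y$ first recovers $\lambda^\omega_t = E_{\bf Y}[\lambda^{\omega,Y}_t]$ and $\theta^\omega_i = E_{\bf Y}[\theta^{\omega,Y}_i]$, and a further expectation over $\omega$ together with Fubini yields \eqref{rw}.

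There is no real obstacle: the only subtle step is verifying that the two potential sources of double-counting (multiple matches to the same $t$, or multiple successful matches to the same $i$) are both ruled out by the definition of \alg. Once that is observed, the equality holds pointwise and the lemma follows immediately.
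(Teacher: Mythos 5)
Your proof is correct and follows essentially the same route as the paper: fix $(\omega,Y)$, observe that the contributions of each attempted match $(i,t)$ to $\lambda^{\omega,Y}_t$ and $\theta^{\omega,Y}_i$ sum to $r_i\one(i,t)$, conclude the pointwise identity with \alg's total reward, and take expectations. You simply spell out the well-definedness checks (each $\lambda_t$ set at most once, each $\theta_i$ incremented nonzero at most once) that the paper states more tersely.
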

\begin{proof}{Proof.}
	Fix sample path $\omega$ and seed $Y$. Let $M^{\omega,Y}$ denote the matching output by \alg\ on the resulting sample path. By definition of the dual values in \eqref{dualset}, we have that $\sum_t \lambda^{\omega,Y}_t+\sum_i \theta^{\omega,Y}_i$ is exactly $\sum_{(i,t)\in M^{\omega,Y}} \one(i,t) r_i$. Taking expectation gives the desired.
\hfill\Halmos 
\end{proof}
%

It remains to show \eqref{fsb}. Before we proceed, let us reflect back on the issue highlighted in Section \ref{sec:badLP} and see how the path based approach resolves it. Consider a simple instance where the first arrival, $t=1$, has an edge to resource $i$ and no other arrival is adjacent to $i$. \alg\ matches $i$ to $t$ and we have,
\begin{eqnarray*}
	E_{{\omega},Y}[\lambda^{\omega,Y}_t+\theta^{\omega,Y}_i\one(i,t)]&&=r_i  E_{{\omega},Y}\Big[(1-g(y_i))\one(i,t)+g(y_i)\big(\one(i,t)\big)^2\Big],	\\
	&&=r_i  E_{{\omega},Y}[\one(i,t)]=p_{it}r_i.
\end{eqnarray*} 
Clearly, we do not face the issue highlighted in Section \ref{sec:badLP}. 
In the path based approach, we are \emph{able to correlate random variables} $\theta^{\omega,Y}_i$ and $\one(i,t)$, and \emph{boost the expectation of the product} $\theta^{\omega,Y}_i\one(i,t)$. In contrast, the dual feasibility condition coming out of the expectation LP has the term $p_{it} \theta_i=E_{{\omega}}[\one(i,t)] E_{{\omega},\mb{Y}}[\theta^{\omega,Y}_i]$, which does not allow for the possibility of correlating the two random variables. 

We now show that inequalities \eqref{fsb} are satisfied with $\alpha=(1-1/e)$, as long as the probabilities are decomposable. In fact, we show a stronger statement as described in the following lemma. 
\begin{lemma}\label{prop1}
		Consider an edge $(i,t)\in E$. Suppose that for the dual fitting given by \eqref{dualset}, for every sample path $\omitt$ and values $y_j$ for all $j\neq i$, denoted $Y_{-i}$, we have,
	\begin{equation}\label{condifsb}
		E_{y_i}\left[E_{\omega}\left[\lambda^{\omega,Y}_t + \theta^{\omega,Y}_i \one(i,t)\mid \omitt\right]\, \big|\, Y_{-i}\right]\geq (1-1/e)p_{it}r_i.
	\end{equation}
Then, inequality \eqref{fsb} is satisfied for edge $(i,t)$ with $\alpha=(1-1/e)$.
	\end{lemma}
\begin{proof}{Proof.} Taking expectation over $Y_{-i}$ and using the tower rule of expectation we have, $E_{\omega}\left[E_{Y}\left[\lambda^{\omega,Y}_t + \theta^{\omega,Y}_i \one(i,t)\right]\mid \omitt\right]\geq (1-1/e)p_{it}r_i,\, \forall \omitt.$
	Taking expectation over $\omitt$ completes the proof.
	\hfill	\Halmos
\end{proof}

	For the following discussion, fix an arbitrary edge $(i,t)$ {\color{black} (not necessarily chosen in the matching)}, sample path $\omitt$, and random seed $Y_{-i}$ for every $j\neq i$. For brevity, we omit $\omitt$ and $Y_{-i}$ from notation and let, 
	\[E_{y_i}\left[E_{\omega}\left[\lambda^{\omega,Y}_i\right]\right]:=E_{{y}_i}\left[E_{\omega}\left[\lambda^{\omega,Y}_t\mid \omitt\right]\mid Y_{-i}\right].\]
	Similarly,
	\[E_{{y}_i}\left[E_{\omega}\left[\theta^{\omega,Y}_i \one(i,t)\right]\right]:=E_{{y}_i}\left[E_{\omega}\left[\theta^{\omega,Y}_i \one(i,t)\mid \omitt\right]\mid Y_{-i}\right].\]
To show \eqref{condifsb}, we separately lower bound $E_{y_i}\left[E_{\omega}\left[\lambda^{\omega,Y}_i\right]\right]$ (Lemma \ref{lambda}) and $E_{{y}_i}\left[E_{\omega}\left[\theta^{\omega,Y}_i \one(i,t)\right]\right]$ (Lemma \ref{theta}). Combining these lower bounds gives us the desired. The lower bound on $E_{y_i}\left[E_{\omega}\left[\lambda^{\omega,Y}_i\right]\right]$ is similar to Lemma 2.3 {\color{black} (Monotonicity Lemma) in \cite{devanur}. At a high level, the lower bound on $E_{{y}_i}\left[E_{\omega}\left[\theta^{\omega,Y}_i \one(i,t)\right]\right]$ follows from a ``dominance" property similar to Lemma 2.2 in \cite{devanur}. However, our analysis has an interesting subtlety (absent in the classical case) due to the fact that matches can be unsuccessful.} This is also where we need the decomposability assumption for edge probabilities. We discuss this further in the proof of Lemma \ref{theta}. 

 With $\omitt$ and $Y_{-i}$ fixed, let $M_{-i}$ denote the matching generated by \alg\ up to (and including) arrival $t$, when it is executed with the reduced set of vertices $I\backslash{i}$. 
 Unlike the case where all edge probabilities are 1, here $M_{-i}$ may have one offline vertex matched multiple times though only one match could have actually succeeded. Let $I^{t}_{-i}$ denote the set of available neighbors of $t$, in this execution of \alg\ without vertex $i$. Define $y^c_i$ such that, \[p_{it}r_i(1-g(y^c_i))=\max_{j\in I^{t}_{-i}} p_{jt}(1-g(y_j)).\]   
Set $y^c_i=1$ if the set $I^t_{-i}$ is empty, and $y^c_i=0$ if  no such value exists. 
Due to the monotonicity of $g(t)=e^{t-1}$, we have a unique value of $y^c_i$. 

\begin{lemma}\label{lambda}
	Given edge $(i,t)$, sample path $\omitt$, seed $Y_{-i}$, we have  $E_{\omega}[\lambda^{\omega,Y}_t]\geq p_{it}r_i(1-g(y^c_i))$ for every $y_i\in[0,1]$. Thus, $E_{y_i}\left[E_{\omega}\left[\lambda^{\omega,Y}_i\right]\right]\geq p_{it}r_i(1-g(y^c_i))$.
\end{lemma}
\begin{proof}{Proof.}
Given edge $(i,t)$, sample path $\omitt$, and random seed $Y_{-i}$, let $I^{t}_{y_i}$ denote the set of available neighbors at $t$ when \alg\ is executed with the full vertex set and seed value $y_i$.
	Observe that if, 
	\[\max_{j\in I^{t}_{y_i}} p_{jt}(1-g(y_j))\geq p_{it}(1-g(y^c_i))\quad \forall y_i\in[0,1],\] 
	then we have the desired. To show this inequality, it suffices to prove that $I^t_{-i}\subseteq I^t_{y_i}\, \forall y_i\in[0,1]$. Let $M_{y_i}$ denote the matching generated by \alg\ up to (and including) $t$ with seed $y_i$. The proof follows from the following cases. 
	 
	 \noindent \textbf{Case I:} In $M_{y_i}$, $i$ is not matched to any arrival prior to $t$. Thus, matchings $M_{-i}$ and $M_{y_i}$ are identical prior to arrival $t$ and 
	 $I^t_{y_i}=I^t_{-i}\cup \{i\}$. 

\noindent \textbf{Case II:} In $M_{y_i}$, $i$ is matched prior to $t$. Let $t'$ denote the first arrival matched to $i$ in $M_{y_i}$. For the sake of contradiction, let $\tau\leq t$ be the first arrival such that there exists a resource $j\in I^\tau_{-i}\backslash I^\tau_{y_i}$. Clearly,  $\tau> t'$ since 
$M_{-i}$ and $M_{y_i}$ are identical prior to $t'$. Now, let $t_j<\tau$ be the arrival successfully matched to $j$ in $M_{y_i}$. So, we have the arrivals in the following order $t'\leq t_j<\tau\leq t$. Since $j\in I^{\tau}_{-i}$, we have, $j\in I^{t_{j}}_{-i}$. 
Further, $I^{t_j}_{-i}\subseteq I^{t_j}_{y_i}$. Thus, $j\in I^{t_j}_{y_i}$ and by definition of \alg, $j$ must be (successfully) matched to $t_j$ in $M_{-i}$, contradiction. Thus, $I^\tau_{-i}\subseteq  I^\tau_{y_i}$ for every $\tau\leq t$. 

\hfill\Halmos 

\end{proof}

\begin{lemma}\label{theta}
	Given edge $(i,t)$, sample path $\omitt$, seed $Y_{-i}$ and decomposable edge probabilities $p_{it}=p_ip_t\, \forall (i,t)\in E$, we have $E_{{y}_i}\left[E_{\omega}\left[\theta^{\omega,Y}_i \one(i,t)\right]\right]\geq p_{it}r_i\int_{0}^{y^c_i} g(x) dx$.
\end{lemma}
\begin{proof}{Proof.}
	Given edge $(i,t)$, sample path $\omitt$, and random seed $Y_{-i}$, consider a value $y_i\in [0,y^c_i)$. There are two possibilities, either $i$ is successfully matched before $t$ and unavailable at $t$, or $i$ is available when $t$ arrives. We will show that in the latter scenario \alg\ matches $i$ to $t$ but first, let us see how this proves the lemma. When $i$ is unavailable at $t$, $\theta^{\omega,Y}_i=r_ig(y_i)$ since $i$ was successfully matched to an arrival preceding $t$. In case $i$ is available and thus matched to $t$ (by assumption), $\theta^{\omega,Y}_i=r_ig(y_i)\one(i,t)$. Therefore, in both cases $E_{\omega}[\theta^{\omega,Y}_i\one(i,t)]=p_{it}r_ig(y_i)$. Since this holds for all values of $y_i< y^c_i$ we have the desired. 
	
	To finish the proof we need to show that, for $y_i< y^c_i$, $i$ is matched to $t$ if available. In the classical/deterministic case this follows directly from $r_i(1-g(y_i))\geq r_i(1-g(y^c_i))$. 
	In our case, we still have $p_{it}r_i(1-g(y_i))\geq p_{it}r_i(1-g(y^c_i))$. However, for some small enough value of $y_i$, say $y_0$, $i$ may be unsuccessfully matched to some arrival $t'$ preceding $t$. This could free up a resource $j$($\neq i$) that was successfully matched to $t'$ for $y_i>y_0$. So for $y_i\leq y_0$, the freed up resource $j$ may be matched to $t$ instead of $i$, even if $i$ was available at $t$. In such a scenario it is not a priori clear if we still have the desired lower bound on $\theta^{Y}_i$. Observe that if $i$ is never matched/offered to any arrival preceding $t$ 
	then such an issue does not arise and the claim follows as in the deterministic case. So the interesting case is when $i$ is matched (unsuccessfully) prior to $t$ for some value $y_i<y_i^c$ and consequently, $t$ is matched to some $j$ ($\neq i$) such that, $E[\lambda^{Y}_t]=p_{jt}r_j(1-g(y_j))>p_{it}r_i(1-g(y^c_i))$. In this case, consider the graph given by the set difference between the matching $M_{y_i}$ generated by \alg\ with seed value $y_i$ and the matching $M_{-i}$ generated by \alg\ when resource $i$ is excluded. On this difference graph, there exists a unique alternating path that includes both $i$ and $t$. 
	Since \alg\ matches every arrival to the resource with maximum expected reduced reward, moving along the alternating path starting with the edge incident on $i$ and applying the decomposition of probabilities, we have for every edge $(i',t')$ on the path, $p_{i}r_i(1-g(y_i))\geq p_{i'}r_{i'}(1-g(y_{i'}))$ (though it may be that, $p_{it}r_i(1-g(y_i))< p_{i't'}r_{i'}(1-g(y_{i'}))$ ). In particular, $p_{i}r_i(1-g(y_i))\geq p_{j}r_{j}(1-g(y_{j}))$ and thus, $p_{it}r_i(1-g(y_i))\geq p_{jt}r_{j}(1-g(y_{j}))$. Therefore, if $i$ is available on arrival of $t$, the algorithm matches $i$ to $t$ (since ties occur w.p. 0). 
\hfill\Halmos 
\end{proof}

\begin{proof}{Proof of Theorem \ref{maint}.} The proof follows by combining Lemmas \ref{dual}, \ref{dualreward}, \ref{lambda} and  \ref{theta}, and using the fact that $1-g(y)+\int^y_o g(x) dx=(1-1/e)\, \forall x\in[0,1]$ when $g(x)=e^{x-1}$.
\hfill\Halmos
 \end{proof}
\subsection{Obstacle for Non-decomposable Probabilities}\label{sec:obs}

To understand the effect of decomposable probabilities in Algorithm \ref{rank}, note that the values $p_{i}r_i (1-g(y_i))$  induce an ordering over the vertices $i\in I$ that is independent of arrivals. Therefore, given any two resources $i,j$, and an arrival $t$ with an edge to both, Perturbed-Greedy prefers to offer $i$ over $j$ whenever $p_ir_i (1-g(y_i))>p_jr_j (1-g(y_j))$. This fixed ordering/ranking is a key component in the proof of Lemma \ref{theta}. 
For general probabilities, Algorithm \ref{rank} may prefer $j$ over $i$ for arrival $t$ but vice versa for another arrival $t'$. 

Let us see an example with non-decomposable probabilities where Lemma \ref{theta} fails. Consider a 3x3 bipartite graph with arrival $t_1$ arriving first, followed by $t_2$, and finally $t_3$. Consider resources $i,j,k$ with resource $i$ adjacent to all three arrivals, resource $j$ adjacent to arrivals $t_2$ and $t_3$ and finally, resource $k$ adjacent solely to arrival $t_1$. 
Let the rewards $r_i,r_j,$ and $r_k$ be identical and equal 1. Consider probabilities $p_{it_1}=p_{kt_1}$, $p_{it_2}=p_{jt_2}$ and $p_{it_3}< p_{jt_3}$. Note that such probabilities are not decomposable. Let us focus on the dual feasibility of constraint corresponding to edge $(i,t_3)$. So we fix $y_j$, $y_k$ with $y_j>y_k$ and consider the following sample path before $t_3$ arrives, $\one(k,t_1)=0,\one(i,t_1)=1$ and $\one(i,t_2)=0$, $\one(j,t_2)=1$. Observe that $y^c_i=1$ and consider $E[\lambda^Y_{t_3}]$ and $E[\theta^Y_i \onee(i,t_3)]$ as $y_i$ varies. 

For $y_i\geq y_j$, $k$ is offered to $t_1$, $j$ is offered to and accepted by $t_2$, and $i$ is offered to $t_3$. Therefore, $E[\theta^Y_i \onee(i,t_3)]=p_{it_3}r_ig(y_i)$ and $E[\lambda^Y_{t_3}]=p_{it_3}r_i(1-g(y_i))$. 

For $y_k<y_i<y_j$, $i$ is offered to but not accepted by $t_2$ and suppose $p_{it_3}$ is sufficiently smaller than $p_{jt_3}$ so $j$ is offered to $t_3$. Thus, $E[\lambda^Y_{t_3}]=p_{jt}r_j(1-g(y_j))$, but $\theta^Y_i=0$ for $y_k<y_i<y_j$. 

Finally, for $y_i\leq y_k$, $i$ is offered to and accepted by $t_1$, $j$ is offered to and accepted by $t_2$ and $t_3$ is unmatched with $\lambda^Y_{t_3}=0=p_{it_3}r_i(1-g(y^c_i))$ and $E[\theta^Y_i \onee(i,t_3)]=p_{it_3}r_ig(y_i)$. 

Combining the pieces we have, $E_{{y}_i}[E[\lambda^Y_{t_3}+\theta^Y_i \onee(i,t_3)]]=p_{it_3}r_i(1-y_j) + p_{jt_3}{r_j}(1-g(y_j))(y_j-y_k)+p_{it_3}r_{i}\int_0^{y_k}g(x) dx$. Clearly, Lemma \ref{theta} does not hold and the previous expectation can be $<0.5 p_{it_3}r_i$. More concretely, let $y_j=1-\epsilon$ and $p_{jt_3},p_{it_3}$ be such that $p_{jt_3}(1-g(y_j))=p_{it_3}(1-g(y_k))$. Then for $g(x)=e^{x-1}$,
\begin{eqnarray*}
	E_{{y}_i}[E[\lambda^Y_{t_3}+\theta^Y_i \onee(i,t_3)]]&&=c\epsilon+ p_{it_3}{r_i}(1-g(y_k))(1-y_k)+p_{it_3}r_{i}\int_0^{y_k}g(x) dx\\
	&&\geq  c\epsilon+ p_{it_3}{r_i} \min_y \Big([1-g(y)](1-y)+\int_0^y g(x) dx\Big)\\
	&&= c\epsilon+ 0.44 p_{it_3}{r_i}, 
\end{eqnarray*}
where $c$ is some non-negative real number and the inequality is tight for $y_k=0.5571$.
This shows that the existing analysis does not lead to a $(1-1/e)$ guarantee for general edge probabilities. Note that using our analysis for $g(x)=1/2$ (which corresponds to the greedy algorithm) we get a guarantee of $1/2$ for general probabilities.

\section{Vanishing Probabilities ($\mb{p_{it}\to0}$)}\label{sec:vanish}
In this section, we consider the case of vanishingly small heterogeneous probabilities. As demonstrated in the previous section, the path based primal-dual analysis does not yield a guarantee better than 0.5 in this setting. We overcome these challenges by introducing a new algorithm and a novel LP free analysis. Our algorithm is deterministic, and this feature eases some of the analytical challenges. It is worth noting that for arbitrary edge probabilities (including 
the classic online bipartite matching problem), 
no deterministic online algorithm can have a guarantee strictly better than 0.5 \citep{kvv}.  

\begin{algorithm}[H]
	\SetAlgoNoLine
	\textbf{Input:} Scaling function $g(.): [0,\infty)\to[0,1]$, offline vertex set $I$\;
	$S = I$\;
	For every $i\in I$, $l_i=0$\;
	\For{\text{every new arrival } $t$}{
		$i^*=\underset{i\mid (i,t)\in E, i\in S}{\arg\max} p_{it}r_ig(l_i)$\;
		\textbf{offer} $i^*$ to $t$\;
		\textbf{If} {t accepts $i^*$} \textbf{update} $S=S\backslash\{i^*\}$\;			
		\textbf{Else} {$l_{i^*}=l_{i^*}+p_{{i^*}t}$};\ }
	\caption{Generalized Fully-Adaptive}
	\label{fullada}
\end{algorithm}
\smallskip

Observe that Algorithm \ref{fullada} matches greedily w.r.t.\ expected \emph{reduced} rewards $p_{it}r_ig(l_i)$. Unlike Algorithm \ref{rank}, the reduced rewards are now deterministic and change over time. Our algorithm generalizes the Fully-Adaptive algorithm proposed, but not analyzed, in \cite{deb} and \cite{mehta}.   
The choice of scaling function $g(\cdot)$ naturally influences the performance of the algorithm. At a high level, it is desirable to choose a decreasing function so that 
the reduced reward of a resource decreases every time it is matched. The guarantee of 0.596 stated in Theorem \ref{mainh} is obtained with $g(t)=e^{t+1}(\int_{t+1}^{\infty} \frac{e^{-y}}{y} dy)$. We also consider other choices of $g(\cdot)$. In particular, we show that a simple inverse scaling function $g(t)= \frac{1}{\beta t+1}$, is at least 0.588 competitive. For the function $g(t)= e^{-\beta t}$, a natural generalization of the function $e^{-t}$ proposed in \cite{mehta}, we establish 0.581 competitiveness. 

As we saw, our previous path based analysis is insufficient when probabilities do not decompose. In fact, the linear system given by \eqref{rw} and \eqref{fsb} may not be feasible for $\alpha>0.5$. We resolve this difficulty via two new ideas. First, we introduce an alternative viewpoint of the reward process that allows a natural characterization of the probability of an algorithm successfully matching a given resource. This viewpoint generalizes an insight from \cite{deb}, and is introduced in Section \ref{sec:vanish1} and Section \ref{sec:trunc}. 
Then, we convert the problem of finding a competitive ratio guarantee for Algorithm \ref{fullada}, to proving the feasibility of linear system. We introduce a new linear system that is a more relaxed (and easier to satisfy) than LP duality based linear systems. For instance, recall that the system given by \eqref{rw} and \eqref{fsb} has a constraint for every edge in the graph (local constraints). In contrast, we now impose a single constraint for every resource (global constraints). The constraints are path-based and inspired by duality but do not necessarily correspond to the dual of a program. This makes the analysis LP \emph{free} and, generally speaking, robust to any issues related to tightness of a mathematical program. 
 We introduce the LP free system in Section \ref{sec:vanish2}. 
 Finally, in Section \ref{sec:fsb} we utilize these ingredients to prove Theorem \ref{mainh}. 


\subsection{An Alternative Viewpoint}\label{sec:vanish1}
Consider a \emph{deterministic} algorithm $\mathcal{A}$ (online or offline) for the stochastic rewards problem. Let $\mathcal{A}$ be \emph{non-anticipative} i.e., the outcome (success/failure) of every match is revealed to $\mathcal{A}$ only after the match is made and any vertex $t\in T$ can be matched at most once. Observe that, both Algorithm \ref{fullada} and the offline benchmark are deterministic. 

We interpret outcomes of matches as a sequential input to $\mathcal{A}$ i.e., every time $\mathcal{A}$ makes a match, an independent Bernoulli sample denoting success/failure of that match is provided to $\mathcal{A}$. \cite{deb} considered the case of identical edge probabilities $p$ and observed that instead of letting the success of each match be determined by an i.i.d.\ random variable, one could consider the following equivalent process. For each resource $i$, independently generate a starting budget $b_i$ from the distribution,
\begin{equation}\label{identical}
	\mathbb{P}\big(b_i=k p\big)=p (1-p)^{k-1}, \text{  for  } k\in\{1,2,\cdots\}.
\end{equation}
In executing $\mathcal{A}$, every match to $i$ reduces the remaining budget by $p$ and a match succeeds when the remaining budget reaches zero. So if the starting budget $b_i=k p$, the $k$-th match made to $i$ would succeed, and none before. Budgets $(b_i)_{i\in I}$ are sampled in advance but $\mathcal{A}$ is unaware of these values. \cite{deb} showed that budgets $(b_i)_{i\in I}$ define an alternative characterization of stochastic rewards i.e., given a matching, the probability that $\mathcal{A}$ outputs this matching is identical in both methods of generating random match outcomes. 

For $p\to 0$, the distribution of budgets, \eqref{identical}, is well approximated by $Exp(1)$, which denotes the exponential distribution with cumulative distribution $F(x)=1-e^{-x}$ for $x\geq 0$. To analyze Algorithm \ref{fullada}, we generalize this viewpoint (and approximation) 
to arbitrary probabilities. 

\smallskip

\noindent \textbf{Generalized Alternative Viewpoint}: Let $U[0,1]$ denote the uniform distribution on interval $[0,1]$. 
We independently sample random variables $u_i\in U[0,1]$ for $i\in I$. These values are not known to $\mcala$ but for every match we provide the outcome to $\mcala$ based on these values. In other words, $\mb{u}=(u_i)_{i\in I}$ gives a sample path over random  match outcomes. To describe how $\mb{u}$ translates to match outcomes, suppose that resource $i$ has been unsuccessfully matched to $s\geq0$ arrivals in the ordered set $S=\{t_1,\cdots,t_{s}\}$ and $\mcala$ matches a new arrival $t_{s+1}$ to $i$. The match succeeds if $1-\prod_{j\leq s+1} (1-p_{it_j})\geq u_i$, and fails otherwise. Observe that $1-\prod_{j\leq s+1} (1-p_{it_j})$ is the probability that at least one of the matches to $S\cup\{t_{s+1}\}$ succeeds.  Thus, $u_i$ is a threshold such that \emph{$i$ is successfully matched to the first arrival where the probability of success exceeds} $u_i$.  

\begin{lemma}\label{same}
Consider a non-anticipative and deterministic algorithm $\mcala$. Let $\mcala(\omega)$ denote the matching output by $\mcala$ on sample path $\omega\in \Omega$. 
Similarly, let $\mcala(\ub)$ denote the matching output by $\mcala$ on sample path $\ub\in[0,1]^{n}$ in the threshold viewpoint. For every matching $\mathcal{M}$, we have $\mathbb{P}(\mcala(\omega)=\mathcal{M})=\mathbb{P}(\mcala(\ub)=\mathcal{M})$.
	\end{lemma}

See Appendix \ref{appx:vanish1} for a proof.
 To connect thresholds $\ub$ to the random budget viewpoint introduced in \cite{deb}, we start by exploring how the matching generated by $\mcala$ changes as we vary a single threshold value while keeping all other values fixed. To this end, consider a resource $i\in I$ and fix threshold values, $\umi\in[0,1]^{n-1}$, for all resources except $i$. 
 Let $\mcala^{\umi}(1)$ denote the matching output by $\mcala$ when the threshold for $i$ is 1 and threshold for every other resource is given by $\umi$. Let
\[\mcala^{\umi}_i(1)=\{t_1,\cdots,t_{m_{\mcala}^{\umi}}\},\]
 denote the ordered set of $m_{\mcala}^{\umi}$ arrivals matched to $i$ in matching $\mcala^{\umi}(1)$. 
 \begin{lemma}\label{basicu}
 	Consider $i\in I$, $\umi\in[0,1]^{n-1}$, and a deterministic non-anticipative algorithm $\mcala$. Let $\mcala^{\umi}_i(u_i)$ denote the ordered set of arrivals matched to $i$ in the matching $\mcala(\ui,\umi)$. 
 	\begin{enumerate}[(i)]
 		\item For every $k\in[m^{\umi}_{\mcala}]$, $i$ is matched to $t_k\in \mcala^{\umi}_i(1)$ if and only if $u_i>1-\prod_{j\leq k-1} (1-p_{it_j}) $. Thus, conditioned on $\umi$, $i$ is matched to $t_k$ with probability $\prod_{j\leq k-1} (1-p_{it_j})$.
 		\item  For every $u_i\in[0,1]$, there exist some $k\in[m^{\umi}_{\mcala}]$ such that $\mcala^{\umi}_i(u_i)=\{t_1,\cdots,t_k\}$. Further,  $\mcala^{\umi}_i(u_i)\subseteq \mcala^{\umi}_i(u'_i)$ for every $0\leq u_i\leq u'_i\leq 1$. 
 	\end{enumerate}
 	\end{lemma}
 A proof is provided in Appendix \ref{appx:vanish1}. Next, we connect thresholds $\ub$ to budgets $\{b_i\}_{i\in I}$.

%
 \noindent \textbf{From Thresholds to Budgets:} Conditioned on thresholds $\umi$, consider the set $\mcala^{\umi}_i(1)$ of $m^{\umi}_{\mcala}$ arrivals that are sequentially matched to $i$ in $\mcala$. Let 
 \[p(k)=\sum_{j\leq k}p_{it_j},\forall k\in[m^{\umi}_{\mcala}].\] 
 In the random budget viewpoint, {\color{black} for any given resource $i$,  conditioned on thresholds $\umi$}, we independently draw a starting budget $b_i$ from a suitable distribution (defined later). Every time resource $i$ is matched, we reduce the remaining budget by the probability of the corresponding edge. The match is successful if the remaining budget is $\leq 0$.  In other words, \emph{the first match where the sum of the probabilities of all edges matched to $i$ is greater than or equal to the starting budget $b_i$, is successful, and all prior matches are failures}. 
 
{\color{black} Conditioned on thresholds $\umi$}, let $F^{\umi}_{\mcala}(x)$ denote the probability distribution of budget $b_i$. Define \[p_i=\max_{t\mid (i,t)\in E} p_{it}\, \text{ and }\, p=\max_{i\in I} p_i.\] The support of $F^{\umi}_{\mcala}(x)$ must include the set $\{p(1),\cdots,p(m^{\umi}_{\mcala})\}$. In addition, we also include the set $\{p(m^{\umi}_{\mcala})+p_i, p(m^{\umi}_{\mcala})+2p_i, p(m^{\umi}_{\mcala})+3p_i,\cdots, +\infty\}$ in the support. For simplicity, let $p(\ell)=p(m^{\umi}_{\mcala})+(\ell-m^{\umi}_{\mcala}) p_i$ for every $\ell\geq m^{\umi}_{\mcala}$. 
On 
$\{p(1),\cdots,p(m^{\umi}_{\mcala})\}$, the distribution is (uniquely) defined by probabilities,
 \begin{equation}\label{threshold}
 \mathbb{P}\left(x= p(k) \right)=p_{it_k} \prod_{j=1}^{k-1}(1-p_{it_j})\quad \forall k\in[m^{\umi}_{\mcala}].
\end{equation}
On $\{p(m^{\umi}_{\mcala}+1), p(m^{\umi}_{\mcala}+2), \cdots, +\infty\}$, 
we define the distribution as,
\[\mathbb{P}\left(x= p(m^{\umi}_{\mcala}) + \ell p_i \right)\,=\, p_{i}\,  \left(1-P(x\leq p(m^{\umi}_{\mcala}))\right)\, \prod_{j=1}^{\ell-1}(1-p_{i})\quad \forall \ell\in\{1,2,\cdots,+\infty\}. \]
This part of the distribution 
  can be set differently as long as, $\mathbb{P}\left(b_i>p(m^{\umi}_{\mcala})\right)=\prod_{j=1}^{m^{\umi}_{\mcala}}(1-p_{it_j})$. We made a choice that is convenient for analysis. Let $G^{\umi}_{\mcala}:[0,1]\to [0,+\infty)$ denote the inverse function such that $F^{\umi}_{\mcala}\left(G^{\umi}_{\mcala}(u)\right)=u$. An alternative way to generate starting budget $b_i\sim F^{\umi}_{\mcala}$ ({\color{black}conditioned on $\umi$}) is as follows: Independently sample threshold $u_i\in[0,1]$ and set budget $b_i=G^{\umi}_{\mcala}(u_i)$. 

\begin{corollary}\label{basiceq}
	Consider a non-anticipative deterministic algorithm $\mathcal{A}$, resource $i\in I$, and condition on thresholds $\umi\in[0,1]^{n-1}$. Let $\mcala_i^{b,\umi}(x)$ denote the ordered set of arrivals matched to $i$ in the budget viewpoint with starting budget $x$. Then,
	\begin{enumerate}[(i)]
		\item $\mcala^{b,\umi}_i\left(G^{\umi}_{\mcala}(u)\right)=\mcala^{\umi}_i(u),\, \forall u\in[0,1]$. 
		\item The probability that $i$ is successfully matched in $\mathcal{A}$ equals $\mathbb{P}\left(G^{\umi}_{\mcala}(u)\leq p(m^{\umi}_{\mcala})\right).$
	\end{enumerate}  
\end{corollary}
The corollary is a direct consequence of Lemma \ref{basicu} and the definition of the budget viewpoint.
%
\smallskip

\noindent \textbf{Remarks:} We refer to $p(m^{\umi}_{\mcala})$ as the \emph{effort threshold} of $i$ conditioned on $\umi$. A resource is successfully matched when its effort threshold no less than the budget. This generalizes the budget viewpoint in \cite{deb}. An important difference between the generalized budget viewpoint and the special case of identical probabilities is that the distribution of threshold $b_i$ is a function of $\umi$. For identical probabilities, the distribution is independent of $\umi$. Note that while the distribution depends on $\umi$, the budget is sampled independently for every $i\in I$.

\subsection{Approximation by $\mb{Exp(1)}$}\label{sec:trunc}


Now, we consider the case of small edge probabilities and show that the exponential distribution, $Exp(1)$, approximates $F^{\umi}_{\mcala}$. 
Let 
\[x = (1\pm \delta) y \quad \Rightarrow\quad  (1-\delta) y\leq x\leq (1+\delta) y.\]
{\color{black} Let $O(x)$ denote $c\cdot x$, where $c$ is a constant that is independent of $x$ and the problem parameters $\{p_{it}\}_{(i,t)\in E},\{r_i\}_{i\in I},n$ and $T$. Notice that $x= (1\pm O(\delta)) y \Leftrightarrow y= (1+O(\delta)) x$.} 
\begin{lemma}\label{expapprox}
	Given a non-anticipative deterministic algorithm $\mathcal{A}$, resource $i\in I$, threshold vector $\umi\in[0,1]^{n-1}$ and $p\leq 0.38$, 
	we have, 
	\begin{enumerate}[(i)]
		\item $1-F^{\umi}_{\mcala}(x) = (1\pm O(\sqrt{p_i}))
		\,\, e^{-x}\quad \forall x\in[0,\frac{1}{\sqrt{p_i}}].$
		\item $ F^{\umi}_{\mcala}(x) = (1\pm 
		O(\sqrt{p_i}))\,\, (1-e^{-x})\quad \forall x\geq 0.$
%
	\end{enumerate} 
\end{lemma}

{\color{black} We give a self-contained proof of the lemma in Appendix \ref{appx:trunc}. Note that, it may also be possible to prove this lemma using using Le Cam's theorem \citep{lecam}.} From this lemma, we have that $Exp(1)$ is a good approximation for budget distribution $F^{\umi}_{\mcala}$ when the effort threshold $p(m^{\umi}_{\mcala})\leq 1/\sqrt{p_i}$. This approximation will play an important role in analyzing Algorithm \ref{fullada}. The next lemma gives another useful means of approximating the true distribution of budgets by $Exp(1)$. In the lemma, we use $E_{u_i}[\cdot]$ to denote expectation w.r.t.\ the randomness in threshold $u_i\sim U[0,1]$. A proof of the lemma is provided in Appendix \ref{appx:trunc}. 
\begin{lemma}\label{funcapx}\label{singfunc}
		Consider a deterministic non-anticipative algorithm $\mcala$, resouce $i\in I$, threshold vector $\umi\in [0,1]^{n-1}$, a non-decreasing {\color{black} integrable function $h: \mathbb{R}^+ \to \mathbb{R}^+$ such that,
	\begin{enumerate}[(i)]
		\item $h(0)=0$,
		\item $h(x)=h(\frac{1}{\sqrt{p_i}})\,\, \forall x\geq \frac{1}{\sqrt{p_i}}$.
\end{enumerate}} 
For $p\leq 0.38$, we have,
\[ E_{u_i} [h\left(G^{\umi}_{\mcala} (u_i)\right)] = (1\pm O(\sqrt{p_i}))\, E_{b \sim Exp(1)} [h(b)]. \]
	\end{lemma}

Recall that both Algorithm \ref{fullada} and the offline benchmark are deterministic. 
Therefore, the lemmas in this section and in Section \ref{sec:vanish1}, apply to both Algorithm \ref{fullada} and the offline benchmark.
In the next section, we present a new path-based linear system and show that proving feasibility of the system establishes a competitive ratio guarantee for Algorithm \ref{fullada}. In Section \ref{sec:fsb}, we use the alternative viewpoint of stochastic rewards (and its approximation) to prove feasibility of this system. 
\subsection{LP Free Framework}\label{sec:vanish2}

 
In this section, we define a system of linear inequalities such that the existence of a feasible solution to this system implies a competitive ratio guarantee for Algorithm \ref{fullada}. Our linear system has a single constraint for every offline vertex as opposed to a constraint for every edge. 
The new feasibility constraints are path-based and inspired by duality but do not necessarily correspond to the dual of a program. This makes the analysis LP \emph{free} and generally speaking, robust to any issues related to tightness of a mathematical program. We start by introducing notation. 

Let \opt\ denote the fully offline algorithm as well as its expected total reward. Similarly, let \alg\ denote denote Algorithm \ref{fullada} and its expected reward. Given a resource $i\in I$, and thresholds $\umi\in[0,1]^{n-1}$, let $p(m^{\umi}_{\alg})$ and $p(m^{\umi}_{\opt})$ denote the effort threshold of resource $i$ in \alg\ and \opt\ respectively. We use $E_{\ub}[\cdot]$ to denote expectation w.r.t.\ the randomness in thresholds $\ub$. For every $i\in I$, let $E_{\umi}[\cdot]$ denote expectation w.r.t.\ randomness in $\umi$.

The variables of the system are $\lambda^{\umi}_{it}$ and $\theta^{\umi}_i$, for every $\umi\in[0,1]^{n-1}, i\in I, (i,t)\in E$. 
The system is defined as follows,
			\begin{eqnarray}
		\theta^{\umi}_i+ \sum_{t\mid (i,t)\in E} \lambda^{\umi}_{it} \geq \alpha r_i\, \left(1-e^{-p(m^{\umi}_{\opt})}\right)
			\quad 
			\forall i\in I,\umi\in [0,1]^{n-1}, \label{dual1}\\
			\beta \alg\, \geq \sum_{i\in I}E_{\umi}\left[\theta^{\umi}_i\right] + \sum_{(i,t)\in E} E_{\umi}\left[\lambda^{\umi}_{it}\right],\label{dual2}\\
			\theta^{\umi}_i,\lambda^{\umi}_{it} \geq 0\quad \forall \umi\in [0,1]^{n-1}, i\in I, t\in T, \label{dual3}
			\end{eqnarray}
where $\alpha,\beta \geq0$ are constants that determine the competitive ratio guarantee. Linearity of the system follows from linearity of the expectation operator.  In Appendix \ref{appx:huang}, we compare our LP free framework with the LP based primal-dual approach employed in \cite{huang}. 

The next lemma shows that the feasibility of this linear system implies a competitive ratio guarantee for $\alg$. 

\begin{lemma}\label{final}
	Given a feasible solution 
	to the system defined by \eqref{dual1}, \eqref{dual2}, and \eqref{dual3}, we have, $\alg\geq \frac{\alpha}{\beta}\,\opt$. 
\end{lemma}
\begin{proof}{Proof.}
	Given $i\in I$ and $\umi\in[0,1]^{n-1}$, let $\opt_i$ denote the expected reward of $\opt$ from successfully matching $i$. Using Corollary \ref{basiceq}(ii) and Lemma \ref{expapprox}$(ii)$, we have, 
	\[\opt_i= (1\pm O(\sqrt{p_i}))\, r_i E_{\umi}\left[1-e^{-p(m^{\umi}_{\opt})}\right].\]
	Thus, given a feasible solution to the system, we have, 
	\[	(1-O(\sqrt{p_i}))\alpha\, \opt_i \leq	E_{\umi}\left[\theta^{\umi}_i + \sum_{t\mid (i,t)\in E} \lambda^{\umi}_{it}\right]\quad \forall i\in I\]
	Summing up these inequalities over $i\in I$, we get, 
		\begin{eqnarray*}
		(1-O(\sqrt{p}))\alpha\, \opt &&\leq\sum_{i\in I} E_{\umi}\left[\theta^{\umi}_i + \sum_{t\mid (i,t)\in E} \lambda^{\umi}_{it}\right],\\
		&&\leq \sum_{i\in I}E_{\umi}\left[\theta^{\umi}_i\right] + \sum_{(i,t)\in E} E_{\umi}\left[\lambda^{\umi}_{it}\right],\\
		&&\leq \beta \alg.
	\end{eqnarray*}
\hfill\Halmos 
\end{proof}

\subsection{Finding a Feasible Solution}\label{sec:fsb}
In this section, we perform a dual fitting i.e., using the matching generated by $\alg$ and $\opt$, we construct a feasible solution to the system given by \eqref{dual1}, \eqref{dual2}, and \eqref{dual3}.
 To define the path-based candidate solution, we couple the randomness in \alg\ and \opt\ by choosing the same thresholds $\ub$ for both algorithms. 
We start by introducing new terminology and notation that we use to define the candidate solution.
 \smallskip
 
 \noindent \textbf{Truncating $\opt$:} 
Consider a non-anticipative deterministic algorithm $\topt$ that executes \opt\ but leaves some arrivals unmatched, resulting in an effort threshold of at most $\frac{1}{\sqrt{p}}\,\, \forall i\in I$. For $i\in I$ and  $t\in T$, let $l_i(\opt,t)$ denote the sum of probabilities of all edges matched to $i$ (in \opt) prior to the match to $t$. Then, for every match $(i,t)$ made by \opt, the truncated algorithm $\topt$ matches $t$ as follows,
 \begin{enumerate}[1.]
 	\item If $l_i(\opt,t)\leq 1/\sqrt{p}$, then match $t$ to $i$.
 	\item If $l_i(\opt,t)> 1/\sqrt{p}$, then leave $t$ unmatched.
 \end{enumerate}

Observe that for every resource $i$ and (partial) sample path $\umi$, the effort threshold of $i$ in $\topt$ i.e., $p(m^{\umi}_{\topt})$, is at most $\frac{1}{\sqrt{p}}$. The next lemma states that truncation has a negligible effect on the expected total revenue. See Appendix \ref{appx:fsb} for a proof.

\begin{lemma}\label{trunc}
For every $i\in I$,	let $\opt_i$ and $\topt_i$ denote the expected total revenue from matching resource $i$ in \opt\ and in $\topt$ respectively. Then, we have, $\topt_i \geq (1- O(\sqrt{p})) \opt_i\,\,\, \forall i\in I$. 
\end{lemma}
%
%
%
\smallskip

 Let $\onee(A)$ be an indicator for event $A$.
Define,
\[tr(x)=\min\left\{x, \frac{1}{\sqrt{p}}\right\}\qquad \forall x\in\mathbb{R}.\]
Let $\topt_i(\umi,b)$ denote the set of arrivals matched to $i$ in $\topt$ on sample path given by thresholds $\umi$ and budget $b$ for resource $i$.
Recall that we use a function $g(\cdot)$ in Algorithm \ref{fullada} to compute the reduced rewards for resources. We define the candidate solution through functions $z^{\umi}_{it}:\mathbb{R}^+\times \mathbb{R}^+\to [0,1]$ and $v^{\umi}_i:\mathbb{R}^+\to [0,1]$. For every $i\in I, (i,t)\in E, \umi\in[0,1]^{n-1}$, let
\begin{eqnarray}
	&& z^{\umi}_{it}(x, y)=p_{it}r_i\, g\left(tr(p(m^{\umi}_\alg))\right)\times \onee\left(t\in \topt_i\left(\umi, x\right)\right)\times \onee\left( y>tr(p(m^{\umi}_\alg))\right) 	\label{z}\\
	&& v^{\umi}_i(x)=   r_i \int_{0}^{\min\{x,\, tr(p(m^{\umi}_\alg))\}} (1-g(y)) dy \label{v}.
\end{eqnarray}
{\color{black} At a high level, these functions split up the total reward of \alg. The function $z^{\umi}_{it}(x, y)$ is a lower bound on the reduced reward of resource $i$ seen by arrivals matched to $i$ in \opt. When the budget of $i$ in \alg\ is ``large", $i$ is not successfully matched and is available at every arrival. In such a scenario, $z^{\umi}_{it}(x, y)$ is non-zero at all arrivals $t\in \topt_i\left(\umi, x\right)$. 
} Now, our candidate solution is defined as,
\begin{eqnarray}
	\lambda^{\umi}_{it}= E_{b\sim Exp(1)}\left[z^{\umi}_{it}\left(b,b\right)\right]\quad \text{ and }\quad \theta^{\umi}_i=E_{b\sim Exp(1)}\left[v^{\umi}_i\left(b\right)\right].\label{lambda2}\label{theta2}
\end{eqnarray}
The solution is non-negative by definition, satisfying \eqref{dual3}. To show the validity of this candidate solution, we first show that inequality \eqref{dual2} is satisfied for a sufficiently small value of $\beta$ (Lemma \ref{dual2proof}). Then, we show that inequalities \eqref{dual1} are satisfied for a large enough value of $\alpha$ (Lemma \ref{generic}). We establish these claims for the following family of functions $g(\cdot)$. 
\smallskip

\noindent \textbf{Valid functions:}  Function $g:[0,\infty)\to [0,1]$ is valid if it satisfies the following properties,
\begin{enumerate}[(a)]
	\item $g(\cdot)$ is a strictly decreasing and differentiable function.
	\item There exists a constant $\eta\geq0$, such that, $g(x)\leq \big(1+\eta \sqrt{\epsilon}\big)\,g(x+\epsilon)\quad \forall \epsilon\leq 1, x\geq 0$.
	\item There exists a constant $c\geq 0$, such that, $ g(x)\leq \frac{c}{x} \quad \forall x>0$.
	\item $\min_{x\geq 0}f(x)=g(0)$, where,
	\[f(x)=1-e^{-x}[1-g(x)(x+1)]-\int_0^{x} g(z)e^{-z} dz.\]
	
	
\end{enumerate} 

\begin{lemma}\label{dual2proof}
	For any valid function $g$ and $p\leq 0.38$,
	the candidate solution given by \eqref{lambda2} satisfies constraint \eqref{dual2} with $\beta=1+O(\sqrt{p})$.
\end{lemma}
\begin{proof}{Proof.}
We first present an alternative way to write the expected total reward of $\alg$. 
Let $\alg(\ub)$ denote the matching generated by $\alg$ given thresholds $\ub$. Let $l^{\ub}_i(t)$ denote the sum of probabilities of edges in $\alg(\ub)$ that are incident on $i$ prior to $t$. For every edge $(i,t)\in \alg(\ub)$, let
	\begin{eqnarray*}
		&&f^{\ub}_{t}= r_i \int_{l^{\ub}_i(t)}^{l^{\ub}_i(t)+p_{it}} g(y) dy,\\
		&&h^{\ub}_{it}= r_i \int_{l^{\ub}_i(t)}^{l^{\ub}_i(t)+p_{it}} (1-g(y)) dy.
	\end{eqnarray*}
Then, for every $(i,t)\in\alg(\ub)$ we have, $f^{\ub}_{t}+h^{\ub}_{it}=p_{it} r_i$. We claim that,
\[\alg = E_{\ub}\left[\sum_{t\in T}f^{\ub}_{t}+\sum_{(i,t)\in \alg(\ub)}h^{\ub}_{it}\right].\]
The proof of this claim is a generalization of Lemma 2 in \cite{deb}. Consider a variant of the stochastic rewards setting where resource usage is stochastic but rewards are deterministic. Given a match $(i,t)$ in the variant, resource $i$ is consumed (unavailable for future matches) with probability $p_{it}$ but we earn a reward $p_{it}r_i$ with probability 1. 
By coupling the thresholds $\ub$ in this variant and the original setting, and using the linearity of expectation, we have that the overall reward of a non-anticipative algorithm $\mathcal{A}$ (such as $\alg$), is the same in both settings. Observe that, $E_{\ub}\left[\sum_{t\in T}f^{\ub}_{t}+\sum_{(i,t)\in \alg(\ub)}h^{\ub}_{it}\right]$ is the expected reward of $\alg$ on the variant problem, giving us the desired. 
Now, to prove the main claim it suffices to show that,
\begin{eqnarray}
&&\sum_{i\in I} E_{\umi}[\theta^{\umi}_{i}]\leq (1+O(\sqrt{p}))\,  \sum_{(i,t)\in \alg(\ub)} E_{\ub}[h^{\ub}_{it}] 
,\label{ub1}\\
&&\sum_{(i,t)\in E} E_{\umi}\left[\lambda^{\umi}_{it}\right]
\leq (1+O(\sqrt{p}))\,E_{\ub}\left[\sum_{t\in T} f^{\ub}_{t}\right]+O(\sqrt{p})\,\alg. \label{ub2}
\end{eqnarray}
\smallskip

\noindent \emph{Proof of \eqref{ub1}:} 
 For every $i\in I$ and $\umi\in[0,1]^{n-1}$, observe that $v^{\umi}_i(x)$ meets the criteria in Lemma \ref{singfunc} -- it is integrable and non-decreasing in $x$, has a constant value for $x\geq \frac{1}{\sqrt{p_i}}$, and $v^{\umi}_i(0)=0$.  
Applying the lemma we get,
\[\theta^{\umi}_i= (1\pm O(\sqrt{p_i}))\, E_{u_i}\left[v^{\umi}_i\left(G^{\umi}_{\alg}(u_i)\right)\right]\qquad i\in I,\, \umi\in[0,1]^{n-1}.\] 
Now, to prove \eqref{ub1}, it suffices to show that $\sum_{i\in I} v^{\umi}_{i}\left(G^{\umi}_{\alg}(u_i)\right) \leq \sum_{i\in I} \sum_{t \in \alg_i(\ub)} h^{\ub}_{it}\quad \forall \ub\in[0,1]^{n}$. In fact, we show the following stronger statement. 
Consider an arbitrary $i\in I$ and $\ub\in[0,1]^{n}$. Let $\alg_i(\ub)$ denote the set of arrivals matched to $i$ in matching $\alg(\ub)$.  Then, 
\begin{eqnarray*} 
	\sum_{t \in \alg_i(\ub)} h^{\ub}_{it} &= &r_i \sum_{t \in \alg_i(\ub)} \int_{l^{\ub}_i(t)}^{l^{\ub}_i(t)+p_{it}} (1-g(y)) dy,\\
	&= & r_i \int_{0}^{\min\left\{G^{\umi}_{\alg}(u_i),\,  p(m^{\umi}_{\alg})\right\}}
	(1-g(y)) dy,\\
	&\geq & v^{\umi}_{i} \left(G^{\umi}_{\alg}(u_i)\right).
	\end{eqnarray*}
 The first equality and the inequality follow by definition. Second equality follows from the fact that $\underset{t\in \alg_i(\ub)}{\max} l^{\ub}_i(t)+p_{it}$, which is the sum of probabilities of all edges incident on $i$ in $\alg(\ub)$, is equal to $\min\left\{G^{\umi}_{\alg}(u_i), p(m^{\umi}_{\alg})\right\}$. This completes the proof of \eqref{ub1}. 
\smallskip

\noindent \emph{Proof of \eqref{ub2}:} The proof consists of two parts. First, we show that,
\[\lambda^{\umi}_{it} \leq (1+O(\sqrt{p_i})) E_{u_i}\left[z^{\umi}_{it}\left(G^{\umi}_{\opt}(u_i), G^{\umi}_\alg(u_i)\right)\right]\quad \forall (i,t)\in E, \umi\in[0,1]^{n-1}.\]
Then, we show,
\[\sum_{(i,t)\in E} E_{\ub}\left[z^{\umi}_{it}\left(G^{\umi}_{\opt}(u_i),G^{\umi}_{\alg}(u_i)\right)\right]
\leq (1+O(\sqrt{p}))\,E_{\ub}\left[\sum_{t\in T} f^{\ub}_{t}\right]+O(\sqrt{p})\,\alg. \]
Combining these inequalities gives us the desired.

Given an edge $(i,t)$ and (partial) sample path $\umi$, observe that,
\[z^{\umi}_{it}\left(G^{\umi}_{\opt}(u_i), G^{\umi}_\alg(u_i)\right)=c\, \onee(A(u_i))\, \onee(B(u_i)),\]
 here $c$ is a constant independent of $u_i$, $A(u_i)$ corresponds to the event that $t\in \topt_{i}(\umi, G^{\umi}_{\opt}(u_i))$, and $B(u_i)$ is the event, $G^{\umi}_{\alg}(u_i)>tr(p(m^{\umi}_\alg))$.  
 From Lemma \ref{basicu}$(ii)$ and Corollary \ref{basiceq}$(i)$, we have that for some $u_A, u_B\in[0,1]$ (independent of $u_i$), event $A(u_i)$ and event $B(u_i)$ are equivalent to events $u_i \geq u_A$ and $u_i \geq u_B$ respectively. For brevity, let \[Z^{\umi}_{it}(u_i)=z^{\umi}_{it}\left(G^{\umi}_{\opt}(u_i),G^{\umi}_{\alg}(u_i)\right)=c\, \onee(u_i>u_A)\, \onee(u_i>u_B).\] 
 Define random variables $X_A(u_i)= c\, \onee(u_i>u_A)$ and $X_B(u_i)= c\, \onee(u_i>u_B)$. 
Then,
\begin{eqnarray*}
	E_{u_i}[z^{\umi}_{it}(u_i)]&= &\onee(u_A\geq u_B)E_{u_i}[X_A(u_i)]+\onee(u_A<u_B)E_{u_i}[X_B(u_i)]\\
&=	&\left(1\pm O(\sqrt{p_i})\right)\, c\,\Big[\onee(u_A\geq u_B)E_{b\sim Exp(1)}[\onee(b>G^{\umi}_{\opt}(u_A))]\\
&&\qquad \qquad\quad\qquad+\onee(u_A<u_B)E_{b\sim Exp(1)}[\onee(b>G^{\umi}_{\alg}(u_B))]\Big]\\
&\geq &\left(1- O(\sqrt{p_i})\right)\, E_{b\sim Exp(1)}[z^{\umi}_{it}(b,b)]\\
&= &\left(1- O(\sqrt{p_i})\right)\, \lambda^{\umi}_{it}.
\end{eqnarray*}
The second equality follows from Lemma \ref{funcapx}. The inequality follows from the fact that for every $b\geq 0$, both $\onee(b>G^{\umi}_{\opt}(u_A))$ and $\onee(b>G^{\umi}_{\alg}(u_B))$ are lower bounded by $\onee(b>G^{\umi}_{\opt}(u_A))\onee(b>G^{\umi}_{\alg}(u_B))$.




Now, we prove the desired upper bound on $\sum_{(i,t)\in E} E_{\ub}\left[Z^{\umi}_{it}(u_i)\right]$. Given sample path $\ub$, we partition resources $i\in I$ into two categories - \emph{normal} and \emph{extreme}. A resource $i$ is \emph{normal} if the effort threshold $p(m^{\umi}_{\alg})\leq 1/\sqrt{p}$. Let $I_1(\ub)$ denote the set of normal resources. Let $I_2(\ub)=I\backslash I_1(\ub)$, denote the set of extreme resources. 

Consider an arbitrary sample path $\ub$ and an arrival $t\in T$. Let $i^*$ denote the normal resource (if any) matched to $t$ in the matching $\topt(\ub)$. Then,
\[\sum_{i\mid\, i\in I_1(\ub),\, (i,t)\in E} Z^{\umi}_{it} (u_i)= r_{i^*}\,\, p_{i^* t}\,\, g(p(m^{u^{\text{-}i^*}}_\alg))\,\, \onee\left(G^{u^{\text{-}i^*}}_\alg (u_i) > p(m^{u^{\text{-}i^*}}_\alg)\right)\,\leq\, (1+O(\sqrt{p}))\,f^{\ub}_t, \]
here the last inequality follows from the fact that \alg\ is greedy w.r.t. expected reduced rewards and property (b) of valid functions. Thus, 
	\[\sum_{(i,t)\in E, i\in I_1(\ub)}Z^{\umi}_{it}(u_i)\,=\,\sum_{t\in T}\,\, \sum_{i\mid\, i\in I_1(\ub),\, (i,t)\in E} Z^{\umi}_{it} (u_i)\,\leq\, (1+O(\sqrt{p}))\, \sum_{t\in T}f^{\ub}_t\quad \forall \ub\in[0,1]^n.\]
For extreme resources, we have the following upper bound for every $\umi\in[0,1]^{n-1}$,  
\begin{eqnarray*}
	\sum_{(i,t) \in E,\, i\in I_2(\ub)}Z^{\umi}_{it}(u_i)&= & \sum_{i\in I_2(\ub)}  \left[r_i g\left(\frac{1}{\sqrt{p}}\right)\times \onee\left(G^{\umi}_{\alg}(u_i)>\frac{1}{\sqrt{p}}\right)\times \sum_{t\in \topt_i(\umi,G^{\umi}_{\opt}(u_i))} p_{it}\right]\\
	&\overset{(A)}{\leq} &\sum_{i\in I_2(\ub)}  r_i  \frac{g(\frac{1}{\sqrt{p}})}{\sqrt{p}}\,\, \onee\left(G^{\umi}_{\alg}(u_i)>\frac{1}{\sqrt{p}}\right)\\
	& \overset{(B)}{\leq} &O(1)\sum_{i\in I_2(\ub)}  r_i \, \onee\left(G^{\umi}_{\alg}(u_i)>\frac{1}{\sqrt{p}}\right).
\end{eqnarray*} 
Inequality $(A)$ follows from the fact that $\sum_{t\in \topt_i(\umi, G^{\umi}_{\opt}(u_i))} p_{it}\leq tr(G^{\umi}_{\opt}(u_i))\leq \frac{1}{\sqrt{p}}$. Inequality $(B)$ follows from  property $(c)$ of valid functions. Combining the upper bounds for normal and extreme resources, we get,
\begin{eqnarray*}
	E_{\ub}\left[\sum_{(i,t) \in E}Z^{\umi}_{it}(u_i)\right]&=& E_{\ub}\left[\sum_{(i,t) \in E,\, i\in I_1(\ub)}Z^{\umi}_{it}(u_i)\right] +E_{\ub}\left[\sum_{(i,t) \in E,\, i\in I_2(\ub)}Z^{\umi}_{it}(u_i)\right]\\
	&\leq & \left(1+O(\sqrt{p})\right)\, E_{\ub}\left[\sum_t f^{\ub}_t\right]\\
	&& + O(1)\sum_{i\in I}r_iE_{\ub}\left[\onee(i\in I_2(\ub)) \times \onee\left(G^{\umi}_{\alg}(u_i)>\frac{1}{\sqrt{p}}\right)\right]
\end{eqnarray*}
It remains to show that $\sum_{i\in I}r_iE_{\ub}\left[\onee(i\in I_2(\ub)) \times \onee\left(G^{\umi}_{\alg}(u_i)>\frac{1}{\sqrt{p}}\right)\right]\,\leq\, O(\sqrt{p})\, \alg $. 
\begin{eqnarray*}
	&&\sum_{i\in I}r_i\,E_{\ub}\left[\onee(i\in I_2(\ub)) \times \onee\left(G^{\umi}_{\alg}(u_i)>\frac{1}{\sqrt{p}}\right)\right]\\
	&&= \sum_{i\in I} r_i\, E_{\umi}\left[\onee\left(p(m^{\umi}_{\alg})>\frac{1}{\sqrt{p}}\right)\times E_{u_i}\left[\onee\left(G^{\umi}_{\alg}(u_i)>\frac{1}{\sqrt{p}}\right)\right]\right]\\
	&&\leq \sum_{i\in I} r_i\,  E_{\umi}\left[\onee\left(p(m^{\umi}_{\alg})>\frac{1}{\sqrt{p}}\right)\right]\times O(\sqrt{p})\\
	&&\leq O(\sqrt{p}) \sum_{i\in I} r_i\, E_{\umi}\left[\onee\left(p(m^{\umi}_{\alg})>\frac{1}{\sqrt{p}}\right)\times \frac{E_{u_i}\left[\onee\left(G^{\umi}_{\alg}(u_i)\leq\frac{1}{\sqrt{p}}\right)\right]}{1-O(\sqrt{p})}\right]\\
	&&\leq O(\sqrt{p})\sum_{i\in I} r_i\, E_{\umi}[E_{u_i}[\onee(G^{\umi}_{\alg}(u_i)<p(m^{\umi}_{\alg}))]]\\
	&&= O(\sqrt{p})\, \alg,
\end{eqnarray*}
here we used the fact that $F^{\umi}_{\alg}(x)=(1\pm O(\sqrt{p})) (1-e^{-x})$ from Lemma \ref{expapprox}$(ii)$ , and the approximation, $1-e^{-\frac{1}{\sqrt{p}}}\geq(1-O(\sqrt{p}))$.
\hfill\Halmos
 \end{proof}

\begin{lemma}\label{generic} 
	Given a valid function $g$, 
	inequalities \eqref{dual1} are satisfied with $\alpha= (1-O(\sqrt{p}))\, g(0)$. 
\end{lemma}
\begin{proof}{Proof.}
	Fix an arbitrary resource $i$ sample path $\umi$. 
To establish \eqref{dual1}, we show that,
\begin{equation}\label{dualtrunc}
	\theta^{\umi}_i+ \sum_{t\mid (i,t)\in E} \lambda^{\umi}_{it}\, \geq\, g(0)\, \left(1-e^{-p\left(m^{\umi}_{\topt}\right)}\right). \end{equation}
Then, using Lemma \ref{trunc} completes the proof.

To prove \eqref{dualtrunc}, we first find lower bounds on $\theta^{\umi}_i$ and $\sum_{t\mid (i,t)\in E} \lambda^{\umi}_{it}$ separately. For brevity, we omit $\umi$ from notation.  In particular, let $\theta_i$ and $\lambda_{it}$ denote the candidate solution $\theta^{\umi}_i,\lambda^{\umi}_{it}$. Let $\topt_i(b)$ denote the set $\topt_i(\umi, b)$ of arrivals matched to $i$ in $\topt$ on sample path given by $(\umi, b)$. Let $\tau_O$ denote the effort threshold $p(m^{\umi}_{\topt})=tr(p(m^{\umi}_{\topt}))$. Similarly, let $\tau_A$ denote the effort threshold $tr(p(m^{\umi}_{\alg}))$. 
	Let $G(z)=\int^z_0 g(x) dx$. Then, we have, 
	\begin{eqnarray}
	\theta_i&&= r_i\,E_{b\sim Exp(1)}\left[ \int_{0}^{\min\{b,\,\tau_A\}} (1-g(y)) dy\right]\nonumber\\
 &&= r_i\Big[\int_0^{\tau_A} \Big(z-\int_{0}^{z} g(y)dy\Big)e^{-z} dz + \int_{\tau_A}^{+\infty}\Big(\tau_A-\int_{0}^{\tau_A} g(y)dy \Big) e^{-z} dz\Big]\nonumber \\
	&&= r_i\Big[-\tau_A e^{-\tau_A} +1-e^{-\tau_A} -\int_0^{\tau_A} (G(z)-G(0)) e^{-z} dz \nonumber\\
	&&\qquad  +\tau_A e^{-\tau_A} -(G(\tau_A)-G(0))e^{-\tau_A}  \Big]\nonumber \\
	&&= r_i \Big[1-e^{-\tau_A}-\int_0^{\tau_A} g(z)e^{-z} dz\Big]. 
	\label{thetahatdef2}
	\end{eqnarray}
	\begin{eqnarray}
	\sum_{t\mid (i,t)\in E}\lambda_{it}&&= r_i\,g(\tau_A)\,\sum_{t\mid (i,t)\in E}p_{it}\,E_{b\sim Exp(1)}\left[ \onee\left(t\in \topt_i(b)\right)\,\, \onee\left( b>\tau_A\right) \right]\nonumber\\
		&&=r_i\,g(\tau_A)\, \int_{\tau_A}^{\infty}\Big(\sum_{ t\in \topt_i( b)}p_{it} \Big) e^{-z}dz\nonumber\\
	&&\geq r_i\,g(\tau_A)\, \int_{\tau_A}^{\infty} \min\{\tau_O,z\} e^{-z} dz\nonumber\\
	&&=r_i\, g(\tau_A)\, \Big[\int_{\tau_A}^{\max\{\tau_O,\tau_A\}} z e^{-z} dz + \tau \int_{\max\{\tau_O,\tau_A\}}^{\infty} e^{-z} dz \Big]\nonumber\\
	&& = r_i\, g(\tau_A)\times \begin{cases}\label{lambdalb}
	\tau_O e^{-\tau_A} &\quad \tau_O \leq \tau_A,\\
	\tau_Ae^{-\tau_A}+e^{-\tau_A}-e^{-\tau_O} &\quad \tau_O> \tau_A.
	\end{cases}
	\end{eqnarray}
Combining \eqref{thetahatdef2} and \eqref{lambdalb}, consider the function,
	\begin{eqnarray*}
		H(\tau_O,\tau_A)&&:= 1-e^{-\tau_A}-\int_0^{\tau_A} g(z)e^{-z} dz + g(\tau_A)\times \begin{cases}
			\tau_O e^{-\tau_A} &\quad \tau_O \leq \tau_A,\\
			\tau_Ae^{-\tau_A}+e^{-\tau_A}-e^{-\tau_O} &\quad \tau_O> \tau_A,
		\end{cases}
	\end{eqnarray*}
	and its derivative,
	\begin{eqnarray*} 
		\frac{\partial H(\tau_O,\tau_A)}{\partial \tau_O}&&\leq g(\tau_A)\times \begin{cases}
			e^{-\tau_A} &\quad \tau_O \leq \tau_A\\
			e^{-\tau_O} &\quad \tau_O> \tau_A \end{cases}\\
		&& 
		<\, g(0)e^{-\tau_O}\quad \forall \tau_A>0,
	\end{eqnarray*}
	here the last inequality uses the fact that $g$ is strictly decreasing (property (a) of valid functions). To prove \eqref{dualtrunc}, it suffices to lower bound the function,
	\[h(x,y)=\frac{H(x,y)}{(1-e^{-x})}\quad \forall  x,y\geq 0.\]
	{\color{black} Using property (d) of valid functions, we have that $\min_{y\geq 0} h(\infty,y)= \min_{x\geq 0} f(x)= g(0)$. Also, $h(x,0)=g(0)$ for all $x\geq 0$. Thus, if there exists a pair of values $x,y$ such that  $h(x,y)<g(0)$, then $x$ must be finite and $y$ must be strictly larger than 0. Consider one such pair $x_1,y_1$. Since $	\frac{\partial H(x,y_1)}{\partial x}<  g(0)e^{-x}=g(0)\, \frac{d(1-e^{-x})}{dx}$, 
	 we have that,
	\[ h(\infty,y_1)\,< \, h(x_1,y_1)\,<\, g(0).\]
	This contradicts the fact that $\min_{y\geq 0} h(\infty,y)= g(0)$. Thus, $h(x,y)\geq g(0)$ for all values $x,y\geq 0$.} 

\hfill\Halmos 
\end{proof}

\begin{proof}{Proof of Theorem \ref{mainh}.}
	Given a valid function $g$ and $p\leq 0.38$, from Lemma \ref{dual2proof} and Lemma \ref{generic} we have that the candidate solution given by \eqref{lambda2} is a feasible solution to the LP free system with $\alpha= (1-O(\sqrt{p}))\, g(0)$ and $\beta=1+O(\sqrt{p})$. Then, using Lemma \ref{final}, we have that $\alg$ is $(1-O(\sqrt{p}))\, g(0)$ competitive.
	
	It remains to find a valid function such that $g(0)\geq 0.596$. 
 Consider the function $g(x)=e^{x+1}\int_{x+1}^{\infty} \frac{e^{-y}}{y} dy$. This is a strictly decreasing and differentiable function (property (a) of valid functions) with $0.5963\leq g(0)\leq0.5964$ (bounds numerically evaluated).
It can be verified that	properties (b) and (c) follow from, 
	\[\frac{1-e^{-x-1}}{2x+2} \,=\, \frac{e^{x+1}}{2x+2}\int_{x+1}^{2x+2} e^{-y} dy\,\leq\,\, g(x)\,\,\leq\, \frac{e^{x+1}}{x+1}\int_{x+1}^{\infty} e^{-y} dy\,=\, \frac{1}{x+1}\quad \forall x\geq 0.\]
	
	 Finally, we also need to show property (d) i.e., $\min_{x\geq 0}f(x)=g(0)$. 
	By definition of $g(\cdot)$ we have, $g(x)-g'(x)=\frac{1}{x+1}$. Thus, $\frac{df(x)}{dx}=0$, which gives us the desired. 
\Halmos \end{proof}
\begin{corollary}
	For $g(x)=\beta_1 \frac{1}{\beta_2x+1}$, with $\beta_1=0.588$, $\beta_2=0.575$, and $p\leq 0.38$, Algorithm \ref{fullada} is $(1-O(\sqrt{p}))\, 0.588$ competitive.
\end{corollary}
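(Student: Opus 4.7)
The plan is to invoke Lemma \ref{generic} directly with $g(t)=\beta_1/(\beta_2 t+1)$, which immediately gives $\alpha=g(0)=\beta_1=0.588$; combined with the approximations \eqref{expapproxapp}--\eqref{moddual1} already established in the proof of Theorem \ref{mainh}, this yields the claimed competitive ratio for Algorithm \ref{fullada}. Strict monotonicity of $g$ is obvious from the form, so the entire task reduces to verifying the two hypotheses of Lemma \ref{generic}.

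For hypothesis (ii), I would compute $g'(x)=-\beta_1\beta_2/(\beta_2 x+1)^2$, so that
\[g(x)-g'(x)=\frac{\beta_1}{\beta_2 x+1}+\frac{\beta_1\beta_2}{(\beta_2 x+1)^2}.\]
Both summands are positive and decreasing in $x$, so the maximum over $x\geq 0$ is attained at $x=0$ and equals $\beta_1(1+\beta_2)=0.588\cdot 1.575\approx 0.926<1$, confirming (ii) with slack. For hypothesis (i), I would first observe that $f(0)=1-(1-g(0))-0=g(0)=\beta_1$, and then differentiate (as in the proof of Theorem \ref{mainh}) to obtain $f'(x)=e^{-x}\bigl[1-(x+1)(g(x)-g'(x))\bigr]$. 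The bracketed quantity equals $1-\beta_1(1+\beta_2)>0$ at $x=0$, and tends to $1-\beta_1/\beta_2=1-0.588/0.575<0$ as $x\to\infty$. A short algebraic check on the rational function $(x+1)(g(x)-g'(x))=\beta_1(x+1)(\beta_2 x+1+\beta_2)/(\beta_2 x+1)^2$ shows it is monotonically increasing in $x$, so $f'$ changes sign exactly once from $+$ to $-$, making $f$ unimodal. Hence $\min_{x\geq 0}f(x)=\min\bigl(f(0),\,f(\infty)\bigr)$.

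It remains to certify $f(\infty)\geq g(0)=\beta_1$. Using the substitutions $u=\beta_2 z+1$ and then $v=u/\beta_2$,
\[f(\infty)=1-\int_0^\infty g(z)e^{-z}\,dz=1-\frac{\beta_1}{\beta_2}\,e^{1/\beta_2}\,E_1(1/\beta_2),\]
where $E_1(y)=\int_y^\infty e^{-v}/v\,dv$ is the exponential integral. The parameters $(\beta_1,\beta_2)=(0.588,0.575)$ are precisely tuned to make this transcendental quantity essentially equal to $g(0)$; numerically, $E_1(1/0.575)\approx 0.0707$, which gives $f(\infty)\approx 0.5885\geq\beta_1$. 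This is the main obstacle of the proof: unlike the case $g(t)=e^{t+1}\!\int_{t+1}^\infty e^{-y}/y\,dy$ handled in Theorem \ref{mainh}, where $f$ is identically constant and so the inequality is automatic, here the parameters sit on the boundary $f(\infty)=g(0)$, so the evaluation of $E_1(1/\beta_2)$ must be carried out to sufficient precision (e.g.\ by truncating the series $E_1(x)=-\gamma-\ln x+\sum_{n\geq 1}(-1)^{n+1}x^n/(n\cdot n!)$) to absorb the rounding. Once this numerical verification is done, Lemma \ref{generic} yields $\alpha=g(0)=0.588$, and Lemma \ref{final} closes the argument.
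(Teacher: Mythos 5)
Your overall route is the same as the paper's — reduce the corollary to checking the two hypotheses of Lemma \ref{generic} for $g(t)=\beta_1/(\beta_2 t+1)$ — but you supply far more detail than the paper (which simply asserts that one "can verify analytically/numerically that $0$ is a minimizer of $f$"). Your verification of hypothesis (ii) is correct and matches the paper; your derivative formula $f'(x)=e^{-x}\bigl[1-(x+1)(g(x)-g'(x))\bigr]$ is correct; and your closed form $f(\infty)=1-\tfrac{\beta_1}{\beta_2}e^{1/\beta_2}E_1(1/\beta_2)$ with the numerical check $\approx 0.5885\geq\beta_1$ is sound.

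However, one step in your argument is factually wrong: the claim that $\phi(x):=(x+1)(g(x)-g'(x))=\beta_1(x+1)(\beta_2 x+1+\beta_2)/(\beta_2 x+1)^2$ is monotonically increasing. Writing $u=\beta_2 x+1$ gives
$\phi=\tfrac{\beta_1}{\beta_2}\bigl[1+\tfrac{2\beta_2-1}{u}+\tfrac{\beta_2(\beta_2-1)}{u^2}\bigr]$, whose derivative in $u$ is $\tfrac{\beta_1}{\beta_2 u^3}\bigl[-(2\beta_2-1)u+2\beta_2(1-\beta_2)\bigr]$. With $\beta_2=0.575$ this vanishes at $u\approx 3.26$ (i.e., $x\approx 3.9$) and is negative beyond it: $\phi$ rises from $\beta_1(1+\beta_2)\approx 0.926$ to a maximum of about $1.046$ and then decays to $\beta_1/\beta_2\approx 1.023$. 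So $\phi$ is \emph{not} monotone. Your desired conclusion — that $f'$ changes sign exactly once from $+$ to $-$ — nonetheless holds, but for a slightly different reason: $\phi$ starts below $1$, crosses $1$ once during its initial rise, and since both its interior maximum and its limit at infinity exceed $1$, it never returns below $1$. That repair preserves the unimodality of $f$ and hence $\min_{x\geq 0}f(x)=\min\{f(0),f(\infty)\}=f(0)=g(0)$, so the rest of your argument goes through.
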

\begin{proof}{Proof.}
	Notice that $g(\cdot)$ is strictly decreasing for $x\geq 0$ and $g(0)=\beta_1=0.588$. Property (b) and (c) follow by definition. One can verify analytically/numerically that $0$ is a minimizer of $f(\cdot)$ (property (d)). 
\hfill\Halmos 
\end{proof}

Similarly, it can be verified that the following guarantee holds for the exponential function $g(x)=e^{-x}$ originally proposed in \cite{deb}.
\begin{corollary}
	For $g(x)=\beta_1 e^{-\beta_2 x}$, with $\beta_1=0.581$, $\beta_2=0.535$, and $p\leq 0.38$, Algorithm \ref{fullada} is $(1-O(\sqrt{p}))\, 0.581$ competitive. 
\end{corollary}

\noindent \emph{Remarks:} In case of vanishing probabilities, the classical upper bound of $(1-1/e)$ does not necessarily hold when comparing against fully offline or clairvoyant. Unlike the decomposable case, which includes the classical case of unit probabilities, the vanishing probabilities case does not immediately include any previous setting with the $(1-1/e)$ barrier. More concretely, recall that from the thresholding process perspective, the marginal increase in objective value obtained by matching $i$ to $t$ is approximately $e^{-l_i(t)}p_{it}$. This value decreases over time if $i$ is unsuccessfully matched many times and $l_i(t)$ increases. As a consequence of these diminishing marginal gains, the hard instances that lead to a $(1-1/e)$ upper bound in deterministic settings are not as detrimental here. 


\section{Equivalence of PBP and Expectation LP for Large Inventory}\label{asymptote}
So far, we focused on the case where there is exactly one unit of every offline vertex $i\in I$. An instance where we have (possibly) multiple units $c_i\geq 1\,\, \forall i\in I$, is a special case of the single unit setting. Indeed, recall that under the assumption that $c_i\to \infty$ for every $i$, the algorithm of \cite{msvv} is $(1-1/e)$ competitive against the expectation based LP for arbitrary probabilities. This raises a natural question: \emph{Is the path based program PBP equivalent to the expectation based LP for large inventory?} In this section, we formally show this to be the case. In fact, we show more strongly that for large inventory, both $OPT(LP)$ and $OPT(PBP)$ are approximately equal to the expected reward of offline (\opt).


\begin{theorem}Let $c_{min}=\min_{i\in I}c_i$. 
	For $c_{min}\geq 2$, we have,
	\[OPT(LP)\Big(1-O\Big(\sqrt{\frac{\log c_{min}}{c_{min}}}\Big)\Big)\leq OPT \leq OPT(PBP)\leq OPT(LP).\] 
	Hence, for $c_{min}\to \infty$, we have, $\opt\to OPT(LP)$.
\end{theorem}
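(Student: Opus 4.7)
The upper bound is immediate as noted in the excerpt: any feasible PBP solution $\{x^\omega_{it}\}$ yields the feasible LP solution $x_{it} := E_{\bm\omega}[x^\omega_{it}]$ of equal value, since by non-anticipativity \eqref{indep}, $E_{\bm\omega}[x^\omega_{it}\onee(i,t)] = p_{it}E_{\bm\omega}[x^\omega_{it}]$, and taking expectation of \eqref{cap} yields \eqref{expectedconserve}. For the lower bound, I would randomly round the optimal LP solution on the unit-inventory transformed instance to produce a feasible PBP solution whose expected value is at least $\bigl(1 - O(\sqrt{\log c_{min}/c_{min}})\bigr)\cdot OPT(LP)$, relying on Chernoff concentration across the $c_i$ copies of each resource.

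The construction is as follows. Work on the unit-inventory instance with $c_i$ symmetric copies of each original $i$, and take a symmetric optimal LP solution $x^\ast$. Let $\tilde x^\ast_{it} := \sum_k x^\ast_{i_k t}$; summing \eqref{expectedconserve} over the $c_i$ copies gives $\sum_t p_{it}\tilde x^\ast_{it} \leq c_i$. Fix $\epsilon = C\sqrt{\log c_{min}/c_{min}}$ and consider the following non-anticipative randomized policy: on arrival $t$, draw an independent uniform $U_t \in [0,1]$, use $U_t$ to decide (via a fixed partition of $[0,1]$) to \emph{attempt} matching $t$ to original $i$ with probability $(1-\epsilon)\tilde x^\ast_{it}$, and if attempting, match $t$ to the lowest-indexed available copy of $i$ (or leave $t$ unmatched if none). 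Conditional on the internal randomness $U = (U_t)$, the resulting indicators $x^{\omega,U}_{i_k t}$ satisfy PBP constraints \eqref{cap}, \eqref{match}, \eqref{indep} for every $\omega$ by design, and averaging over $U$ produces the feasible fractional PBP solution $\bar x^\omega_{i_k t} := E_U[x^{\omega,U}_{i_k t}]$.

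To lower bound the objective value, couple the policy with an \emph{ideal} variant that pretends each original $i$ has an unlimited number of copies. In the ideal variant, the indicator $W^t_i$ of a successful match to $i$ at time $t$ is an independent Bernoulli with parameter $(1-\epsilon)\tilde x^\ast_{it}p_{it}$, so $Y_i := \sum_t W^t_i$ has mean $\mu_i \leq (1-\epsilon)c_i$, and multiplicative Chernoff gives $\mathbb{P}(Y_i \geq c_i) \leq \exp(-\Omega(\epsilon^2 c_i))$, which for large enough $C$ is $c_{min}^{-K}$ for any desired $K$. The ideal and actual policies agree until the first arrival at which $Y_i$ exceeds $c_i$ for some $i$, and the actual reward differs from the ideal by at most $\sum_i r_i\max(0,Y_i - c_i)$; combining with $E[\text{ideal reward}] = (1-\epsilon)\cdot OPT(LP)$ yields the claimed lower bound on $OPT(PBP)$.

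I expect the main technical subtlety to be controlling the expected overflow loss $\sum_i r_i\,E[\max(0, Y_i - c_i)]$ uniformly in the reward distribution rather than merely the probability of overflow; the clean route is a per-resource Chernoff tail integration bounding each expected excess by a vanishing fraction of $\mu_i$, yielding total loss $o(OPT(LP))$ once $C$ is chosen large enough relative to the desired polynomial in $c_{min}$.
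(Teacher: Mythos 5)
Your argument is correct but takes a genuinely different route from the paper's. For the lower bound, the paper constructs a feasible PBP solution \emph{directly} from an LP-optimal $\{x_{it}\}$: it scales to $x_{it}/(1+\delta_i)$ with $\delta_i=\sqrt{\log c_i/c_i}$ and truncates pathwise, setting $x^\omega_{it}=0$ once the running sum $\sum_{t'\le t}\onee(i,t')x_{it'}$ first exceeds $c_i(1+\delta_i)$ (a stopping time, hence non-anticipative). The reward loss is then controlled by a single Chernoff bound on the \emph{probability} that the full sum exceeds the threshold, since the post-truncation contribution is simply dropped from a lower bound. You instead build a non-anticipative randomized matching policy (introducing auxiliary coins $U_t$, then averaging them out), couple it to an idealized unlimited-inventory process, and bound the reward loss by the \emph{expected overflow} $\sum_i r_i\,E[(Y_i-c_i)^+]$, which requires the Chernoff tail integration you flag. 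Both analyses are sound; the paper's is somewhat leaner because the truncation converts an expected-excess estimate into a mere tail-probability estimate, and it never needs the unit-inventory blow-up. On that last point, your detour through $c_i$ symmetric copies is unnecessary and slightly obscures that the theorem's $OPT(\mathrm{PBP})$ refers to the $c_i$-capacity program with a single indicator $\onee(i,t)$ per edge: your policy translates directly to that instance (``attempt $i$ with probability $(1-\epsilon)x^*_{it}$; match if $i$ has residual capacity''), and stating it that way would remove any worry about whether the unit-copy PBP and the $c_i$-capacity PBP have matching optima.
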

\begin{proof}{Proof.}
	Let $\{x_{it}\}_{(i,t)\in E}$ be a feasible solution for the expectation based LP. Consider the non-anticipative offline algorithm that for some carefully chosen value $\delta>0$ (to be decided later), matches every arrival $t\in T$ (in order) by independently sampling a resource according to the distribution $\{\frac{x_{it}}{1+\delta}\}_{i\in I}$.  If the sampled resource is unvailable, the arrival is not matched. We call this the independent rounding (IR) algorithm. Let $IR(\delta)$ denote the expected total reward of this algorithm. Note that $IR (\delta)\leq \opt\,\, \forall \delta\geq 0$. We show that for $\delta=\sqrt{\frac{2\log c_{min}}{c_{min}}}$,
	\begin{equation}\label{ir}
	OPT(LP)\Big(1-O\Big(\sqrt{\frac{\log c_{min}}{c_{min}}}\Big)\Big)\leq IR \left(\sqrt{\frac{2\log c_{min}}{c_{min}}}\right),
	\end{equation}
	 which gives us the desired. To prove \eqref{ir}, it suffices to show that in IR, for every edge $(i,t)\in E$, resource $i$ is successfully matched to $t$ w.p.\ at least $p_{it}\,x_{it}\,\left(1-O\Big(\sqrt{\frac{\log c_{min}}{c_{min}}}\Big)\right)$. To this end, consider the following indicator random variables for every $(i,t)\in E$, (i) $A(i,t)$ to indicate if $i$ is available at $t$, (ii) $\onee(i,t)$ to indicate the success/failure (stochastic reward) of the edge, and (iii) $X_{it}$ to denote if $i$ is sampled at $t$. 

	Given an arbitrary edge $(i,t)$, the probability that $i$ is successfully matched to $t$ is 
	\[p_{it}\,\, \frac{x_{it}}{1+\sqrt{\frac{2\log c_{min}}{c_{min}}}}\, \mathbb{P}\left( \sum_{\tau\leq t } A(i,\tau)\,\, \onee(i,\tau)\, X_{i\tau} \leq c_i \right),\]
where,
	\begin{eqnarray*}
		\mathbb{P}\left( \sum_{\tau\leq t } A(i,\tau)\, \onee(i,\tau)\, X_{i\tau} \leq c_i \right) &\geq &\mathbb{P}\left( \sum_{\tau\leq t } \onee(i,\tau)\, X_{i\tau} \leq c_i \right)\\
		&= & 1-\mathbb{P}\left( \sum_{\tau\leq t } \onee(i,\tau)\, X_{i\tau} > c_i \right)\\
		&\geq &1-\frac{1}{\sqrt{c_{min}}}.
	\end{eqnarray*}
The last inequality follows from the standard Chernoff bound for independent Bernoulli random variables. To see this, consider random variables $\{Y_{\tau}\}_{\tau=1}^{t}:=\{\onee(i,\tau)X_{i\tau}\}_{\tau=1}^t$. These are independent Bernoulli random variables (by definition). From the feasibility of $\{x_{it}\}_{(i,t)\in E}$, the total mean, $E[\sum_{\tau\leq t} Y_{\tau}]\leq \frac{c_i}{1+\sqrt{\frac{\log c_{min}}{c_{min}}}}$. Thus,
\[\mathbb{P}\left(\sum_{\tau\leq t}Y_{\tau}\geq c_i\right) \leq e^{-\frac{\frac{2\log c_{min}}{c_{min}}}{3}c_i}\leq 
\frac{1}{\sqrt{c_i}}. \]

\hfill\Halmos 
\end{proof}

\section{Conclusion, Open Problems, and Further Work}\label{open}
In this paper, we consider a vertex-weighted version of the problem of online matching with stochastic rewards with the goal of designing online algorithms to compete against offline solutions that know the entire sequence of arrivals in advance but learn the stochastic outcomes after matching decisions are made. 
We show that when probabilities $p_{it}$ decompose as $p_{i}\times p_t$ for every edge $(i,t)$, a natural generalization of the Perturbed-Greedy algorithm gives the best possible competitive guarantee of $(1-1/e)$. When probabilities are small but otherwise fully heterogeneous, we show that a deterministic algorithm that generalizes the Fully-Adaptive algorithm suggested in \cite{deb}, is $0.596$ competitive. 

Our analysis for decomposable probabilities involves a novel path based program that gives a tighter bound on the optimal value of offline. 
For the second case, we develop a new LP free approach where it suffices to find a solution to a certain linear system that compares an online algorithm directly to the actions of offline. We propose a novel candidate solution for this linear system and by generalizing an alternative perspective of the stochastic reward process introduced in \cite{deb}, we show that the proposed candidate indeed satisfies the linear system. To the best of our knowledge, our results are the first to give a means to obtain provably tighter bound on clairvoyant and fully offline beyond the standard expectation LP. Finally, we show that for arbitrary probabilities but large inventory (many copies of each offline vertex), the path based program and the expectation LP are asymptotically equivalent to clairvoyant.

Beating $1/2$ without any assumptions on probabilities still remains open for this problem. Similarly, in the more general setting of online assortments, achieving a competitive guarantee better than $1/2$ remains open (except for large inventory, see \cite{negin}). A more immediate line of investigation would be to determine the best one could do in the vanishing probabilities case. Here it may, in fact, be possible to achieve a guarantee better than $(1-1/e)$. We believe that the ideas of analysis developed here may be useful more broadly in other settings when comparing against offline benchmarks in the presence of stochastic elements. 
Indeed, \cite{reuseopt} further generalize the LP free framework, developed 
in Section \ref{sec:vanish2}, to circumvent issues with LP based primal-dual analysis for a very different type of post-allocation stochasticity\footnote{They consider the case of stochastic reusability, where resources are used for a stochastic duration and then returned for reallocation. The duration is realized only when the resource is returned for reallocation, making it a form of post allocation stochasticity.}.

%
%
%
 \begin{APPENDICES}
\section{Omitted Proofs}\label{appx:omitted}
\subsection{Missing Proofs from Section \ref{sec:vanish1}} \label{appx:vanish1}
\begin{repeatlemma}[Lemma \ref{same}.]
	Consider a non-anticipative and deterministic algorithm $\mcala$. Let $\mcala(\omega)$ denote the matching output by $\mcala$ on sample path $\omega\in \Omega$. 
	Similarly, let $\mcala(\ub)$ denote the matching output by $\mcala$ on sample path $\ub\in[0,1]^{n}$ in the threshold viewpoint. For every matching $\mathcal{M}$, we have $\mathbb{P}(\mcala(\omega)=\mathcal{M})=\mathbb{P}(\mcala(\ub)=\mathcal{M})$.
\end{repeatlemma}
\begin{proof}{Proof.}
	Given a matching $\mathcal{M}$ in the original viewpoint, consider an arbitrary resource $i\in I$ and let $\mathcal{M}_i$ denote the ordered set of arrivals matched to $i$. Arbitrarily fix the randomness in all edges not incident on $i$. Let $\omega_{-i}$ denote this (partial) sample path. Conditioned on $\omega_{-i}$, let $p_{\mathcal{M}_i}$ denote the probability (w.r.t.\ randomness in edges incident on $i$) that the $\mathcal{M}_i$ is the set of arrivals matched to $i$ by $\mathcal{A}$. 
	
	Now, conditioned on $\omega_{-i}$, consider the threshold viewpoint just for resource $i$. Generate a random threshold $u_i$ for $i$. By definition of the threshold viewpoint, the probability (w.r.t. randomness in $u_i$) that $\mathcal{A}$ matches $i$ to the ordered set of arrivals $\mathcal{M}_i$ in the threshold viewpoint, is exactly $p_{\mathcal{M}_i}$. Therefore, we have that sample paths $(\omega_{-i}, u_i)$, where the success/failure of matches made to $i$ are decided by the threshold viewpoint, are stochastically equivalent to sample paths $\omega$. Iteratively applying this argument to covert the sample path of all resources (one at a time), we have that sample paths $\ub$ and $\omega$ are stochastically equivalent.  
	\hfill\Halmos
\end{proof}
\begin{repeatlemma}[Lemma \ref{basicu}.]
	Consider $i\in I$, $\umi\in[0,1]^{n-1}$, and a deterministic non-anticipative algorithm $\mcala$. Let $\mcala^{\umi}_i(u_i)$ denote the ordered set of arrivals matched to $i$ in the matching $\mcala(\ui,\umi)$. 
	\begin{enumerate}[(i)]
		\item For every $k\in[m^{\umi}_{\mcala}]$, $i$ is matched to $t_k\in \mcala^{\umi}_i(1)$ if and only if $u_i>1-\prod_{j\leq k-1} (1-p_{it_j}) $. Thus, conditioned on $\umi$, $i$ is matched to $t_k$ with probability $\prod_{j\leq k-1} (1-p_{it_j})$.
		\item  For every $u_i\in[0,1]$, there exist some $k\in[m^{\umi}_{\mcala}]$ such that $\mcala^{\umi}_i(u_i)=\{t_1,\cdots,t_k\}$. Further,  $\mcala^{\umi}_i(u_i)\subseteq \mcala^{\umi}_i(u'_i)$ for every $0\leq u_i\leq u'_i\leq 1$. 
	\end{enumerate}
\end{repeatlemma}
\begin{proof}{Proof.}
	Given resource $i$ and thresholds $\umi$, let $\mathcal{A}^{\umi}_{i}(1)$ denote the ordered set of arrivals matched to $i$ when $u_i=1$. From the non-anticipative nature of $\mathcal{A}$, we have that for every threshold $u_i\in[0,1]$, $\mcala$ sequentially matches $i$ to the arrivals in $\mathcal{A}^{\umi}_{i}(1)$ (in order), until one of the following occurs (i) The probability of success exceeds the threshold $u_i$ and $i$ is successfully matched to an arrival in $\mathcal{A}^{\umi}_{i}(1)$, or, (ii) All arrivals in $\mathcal{A}^{\umi}_{i}(1)$ are unsuccessfully matched to $i$ and $i$ is not matched to any other arrivals.  
	
	Therefore, for every arrival $t_k\in \mathcal{A}^{\umi}_{i}(1)$, $i$ is matched to $t_k$ if and only if $u_i$ is large enough that the probability of one of $i$'s previous matches succeeding is smaller than $u_i$ i.e., $u_i>1-\prod_{j\leq k-1}(1-p_{it_j})$. This proves $(i)$. Further, this shows that if $i$ is matched to an arrival $t$ for some threshold value $u_i$, then $i$ is matched to $t$ for every threshold value $u'_i\geq u_i$. This proves part $(ii)$. 
	\hfill 	\Halmos 
\end{proof}
	\subsection{Missing Proofs from Section \ref{sec:trunc}} \label{appx:trunc}
	\begin{repeatlemma}[Lemma \ref{expapprox}.]
		Given a non-anticipative deterministic algorithm $\mathcal{A}$, resource $i\in I$, threshold vector $\umi\in[0,1]^{n-1}$ and $p\leq 0.38$, 
		we have, 
		\begin{enumerate}[(i)]
			\item $1-F^{\umi}_{\mcala}(x) = (1\pm O(\sqrt{p_i}))
			\,\, e^{-x}\quad \forall x\in[0,\frac{1}{\sqrt{p_i}}].$
			\item $ F^{\umi}_{\mcala}(x) = (1\pm 
			O(\sqrt{p_i}))\,\, (1-e^{-x})\quad \forall x\geq 0.$
		\end{enumerate} 
	\end{repeatlemma}
	\begin{proof}{Proof.}
		Consider an arbitrary value $x\geq 0$. 
		Let $\{p(1),\cdots, p(k)\}$ denote the set of all points $\leq x$ in the support of $F^{\umi}_{\mcala}$. Note that $x-p(k)$ can be at most $p_i$. Using the fact that $1-z\leq e^{-z}\leq 1$ for $z\geq 0$, we have that, $e^{-p(k)}=(1\pm O(p_i))\, e^{-x}$. Let $p(0)=0$ and $p_{it_j}=p(j)-p(j-1)\quad \forall j\in[k]$.
		\smallskip
		
		\noindent \emph{Proof of $(i)$:} By definition of distribution $F^{\umi}_{\mcala}$, we have, $1-F^{\umi}_{\mcala}(x)=\prod_{j=1}^k (1-p_{it_j})$. We show that for $x\in [0,\frac{1}{\sqrt{p_i}}]$, the product $\prod_{j=1}^k (1-p_{it_j})$ is well approximated by $e^{-p(k)}$. Then, the fact that $e^{-p(k)}=(1\pm O(p_i)) e^{-x}$, gives us the desired. 
		\begin{eqnarray*}
			1\leq \frac{e^{-\sum_{j=1}^k p_{it_j}}}{\prod_{j=1}^k (1-p_{it_j})}&&\overset{(a)}{\leq} \frac{\prod_{j=1}^k (1- p_{it_j}+p_{it_j}^2)}{\prod_{j=1}^k (1-p_{it_j})},\\
			&&\overset{}{=} \frac{\prod_{j=1}^k (1-p_{it_j})\prod_{j=1}^k\big(1+ \frac{p^2_{it_j}}{1-p_i}\big)}{\prod_{j=1}^k (1-p_{it_j})},\\
			&&\overset{(b)}{\leq} \prod_{j=1}^k e^{\frac{p_{i}}{1-p_i}p_{it_j}}\,\, \leq\quad e^{\frac{p_{i}}{1-p_i}\left(\frac{1}{\sqrt{p_i}}\right)},\\
			&&\overset{(c)}{\leq} 1+O\left(\sqrt{p_i}\right),
		\end{eqnarray*}
		here $(a)$ follows from the inequality, $e^{-z}\leq 1-z+z^2$ for $z\geq 0$. Inequality $(b)$ follows from the fact that $1+z\leq e^{z}$ for $z\in \mathbb{R}$. To see $(c)$, observe that for $p\leq 0.38$, we have, $\frac{\sqrt{p_i}}{1-p_i}\leq 1$. Then, using the inequality $e^{z}\leq 1+2z$ for $z\in[0,1]$, gives us the final inequality.
		\smallskip
		
		\noindent \emph{Proof of $(ii)$:} 
		From part $(i)$ we have, 
		\[F^{\umi}_{\mcala}(x) = \sum_{j=1}^k \left[p_{it_j} \prod_{\ell=1}^j (1-p_{it_\ell})\right] = \left(1\pm O\left(\sqrt{p_i}\right)\right) \sum_{j=1}^k p_{it_j} e^{-p(j)}
		\quad \forall x\in\left[0,\frac{1}{\sqrt{p_i}}\right].\]
		Notice that $ \sum_{j=1}^k p_{it_j} e^{-p(j)}=(1\pm O(p_i)) \int_0^x e^{-z} dz$, which establishes part $(ii)$ for $x\leq \frac{1}{\sqrt{p_i}}$. Consequently, we have, $F^{\umi}_{\mcala}(\frac{1}{\sqrt{p_i}})\geq (1-O(\sqrt{p_i}))$.
		
		For $x>\frac{1}{\sqrt{p_i}}$, we have 
		$F^{\umi}_{\mcala}(\frac{1}{\sqrt{p_i}}) \leq F^{\umi}_{\mcala}(x)\leq 1$. Then, using $F^{\umi}_{\mcala}(\frac{1}{\sqrt{p_i}})\geq (1-O(\sqrt{p_i}))$, we get, $ F^{\umi}_{\mcala}(x) = (1\pm 
		O(\sqrt{p_i}))\,\, (1-e^{-x})\quad \forall x\geq 0.$
		
		\hfill	\Halmos\end{proof}
	
	\begin{repeatlemma}[Lemma \ref{funcapx}.]
		Consider a deterministic non-anticipative algorithm $\mcala$, resouce $i\in I$, threshold vector $\umi\in [0,1]^{n-1}$, a non-decreasing {\color{black} integrable function $h: \mathbb{R}^+ \to \mathbb{R}^+$ such that,
		\begin{enumerate}[(i)]
			\item $h(0)=0$,
			\item $h(x)=h(\frac{1}{\sqrt{p_i}})\,\, \forall x\geq \frac{1}{\sqrt{p_i}}$.
		\end{enumerate}} 
	For $p\leq 0.38$, we have,
		\[ E_{u_i} [h\left(G^{\umi}_{\mcala} (u_i)\right)] = (1\pm O(\sqrt{p_i}))\, E_{b \sim Exp(1)} [h(b)]. \]
	\end{repeatlemma}
	\begin{proof}{Proof.} Let $p(0)=0$ and recall that $p(\ell)=p(m^{\umi}_{\mcala})+(\ell-m^{\umi}_{\mcala}) p_i\quad \forall \ell \geq m^{\umi}_{\mcala}$. For simplicity, let $\Delta h(k)=h(p(k))-h(p(k-1))\quad \forall k\geq 1$. 
	{\color{black}	Using the fact that $h(x)=h(\frac{1}{\sqrt{p_i}})$ for all $x\geq 1/\sqrt{p_i}$, we have, $\Delta h(k)=0$ for every $k$ such that $p(k-1)\geq \frac{1}{\sqrt{p_i}}$.
		\begin{eqnarray*}
			E_{u_i} [h\left(G^{\umi}_{\mcala} (u_i)\right)] &=  &E_{b\sim F^{\umi}_{\mcala}} [h(b)], \\
			&= &\sum_{k\geq 1} (1-F^{\umi}_{\mcala}(p(k-1))) \Delta h(k), \\
			&= & (1\pm O(\sqrt{p_i})) \sum_{k\geq 1} e^{-p(k-1)} \Delta h(k),\\
				&= & (1\pm O(\sqrt{p_i})) \sum_{k\geq 1} \left(e^{-p(k-1)}-e^{-p(k)}\right)\, h(p(k)) ,\\
					&= & (1\pm O(\sqrt{p_i})) \sum_{k\geq 1} \left(h(p(k))\, \int_{p(k-1)}^{p(k)}e^{-x}dx \right),
		\end{eqnarray*}
		here the first and second equation follow from the definition of $F^{\umi}_{\mcala}$. The third equation follows from Lemma \ref{expapprox} $(i)$ and the fact that $\Delta h(k)=0$ for every $k$ such that $p(k-1)\geq \frac{1}{\sqrt{p_i}}$. 
		
		Now, to prove the main claim it suffices to show that,
		\begin{equation}	 \label{step}
			 \sum_{k\geq 1} \left(h(p(k))\, \int_{p(k-1)}^{p(k)}e^{-x}dx \right)
		=  (1\pm O(p_i)) \int_0^{+\infty} e^{-x}\, h(x)\, dx.
	\end{equation}
	Observe that for every non-decreasing and integrable function $h(\cdot)$,
		\[  \int_{p(k-1)}^{p(k)}e^{-x}\,h(x)\, dx\,\leq\, h(p(k))\, \int_{p(k-1)}^{p(k)}e^{-x}dx\,\leq \, e^{p_i} \int_{p(k-1)+p_i}^{p(k)+p_i}e^{-x}\,h(x)\, dx\, \quad \forall k\geq 1.\] 
		Summing up the inequalities above for $k\geq 1$ and using the fact that $e^{p_i}\leq 1+2p_i$ for $p_i\in[0,1]$, completes the proof.}
%
		\hfill\Halmos	
	\end{proof}

\begin{repeatlemma}[Lemma \ref{determ}.]
	The optimal fully offline and clairvoyant algorithms are deterministic and given by a (possibly exponential time) dynamic program.
\end{repeatlemma} 
\begin{proof}{Proof.}
	Let us start with clairvoyant benchmark. The state space is given by the set of arrivals matched and the success/failure of matches so far. The decision space of clairvoyant is simply the matching decision for the next arrival in the sequence. Let $V(t,S_t)$ denote the optimal value-to-go given that at arrival $t$, the set of remaining resources is $S_t$. Clearly,
	\[V(t,S_t)=\max_{i\mid i\in S_t, (i,t)\in E} p_{it}\big(r_i+ V(t+1,S_t\backslash\{i\})\big) + (1-p_{it}) V(t+1,S_t). \]
Therefore,	regardless of the values of $V(t+1,S_t)$ and $V(t+1,S_t\backslash\{i\})$, the optimal decision at arrival $t$ is deterministic. In case of fully offline, the action space also includes the order in which arrivals are visited. It is easy to see that a similar argument applies to fully offline.
	\hfill\Halmos 

\end{proof}
		\subsection{Missing Proofs from Section \ref{sec:fsb}} \label{appx:fsb}
		\begin{repeatlemma}[Lemma \ref{trunc}.]
			For every $i\in I$,	let $\opt_i$ and $\topt_i$ denote the expected total revenue from matching resource $i$ in \opt\ and in $\topt$ respectively. Then, we have, $\topt_i \geq (1- O(\sqrt{p})) \opt_i\,\,\, \forall i\in I$. 
		\end{repeatlemma}
		\begin{proof}{Proof.}
			Consider an arbitrary resource $i\in I$ and fix sample path $\umi\in[0,1]^{n-1}$. Let $p(m^{\umi}_\opt)$ and $p(m^{\umi}_{\topt})$ denote the effort threshold of $i$ in \opt\ and $\topt$ respectively. By definition of $\topt$, we have, $p(m^{\umi}_{\topt})=\min\left\{p(m^{\umi}_\opt),\frac{1}{\sqrt{p}}\right\}$. Let $p_1$ denote the probability that $i$ is successfully matched in \opt\ (conditioned on $\umi$). Similarly, let $p_2$ denote the probability that $i$ is successfully matched in $\topt$. It suffices to show that $p_2\geq (1-O(\sqrt{p})) p_1$. 
			
			When $p(m^{\umi}_{\topt})=p(m^{\umi}_\opt)$, we have, $p_2=p_1$, since \opt\ and $\topt$ match $i$ to the same set of arrivals. If $p(m^{\umi}_{\topt})<p(m^{\umi}_\opt)$, then we have, $p(m^{\umi}_{\topt})=\frac{1}{\sqrt{p}}$. Using Corollary \ref{basiceq}$(ii)$ and Lemma \ref{expapprox}$(ii)$, we have that, $p_2\geq (1-O(\sqrt{p_i}))\, e^{-\frac{1}{\sqrt{p}}}\geq 1-O(\sqrt{p})$. Since $p_1\leq 1$, we have the desired. 
			
			\hfill\Halmos
			
		\end{proof}
\section{Comparison with Configuration LP in \cite{huang}}\label{appx:huang}

Recall that \cite{huang} show a 0.576 competitive ratio guarantee against the expectation LP for vanishingly small probabilities. The use a primal-dual approach with a new LP that approximates the expectation LP for small probabilities. The dual of the LP in Section 2.2 of \cite{huang}, has the following constraints (paraphrased in our notation),
	\[\theta^{}_i + \sum_{t\in N_i} \lambda^{}_t \geq \alpha r_i \min\Big\{1,\sum_{t\in N_i}p_{it}\Big\}\quad \forall i,N_i\subseteq \{t\mid (i,t)\in E \}. \]
These constraints differ from inequalities \eqref{dual1} in our LP free framework in two important ways: (i) For any given $i$, the above constraints are imposed for all subsets of edges incident on $i$. In contrast, inequalities \eqref{dual1} are imposed over all possible \emph{sample paths} $\umi$. The dependence on sample paths allows us 
to define our candidate solution as a function of coupled randomness in \alg\ and \opt. 
(ii) The quantity $\min\Big\{1,\sum_{t\in N_i}p_{it}\Big\}$ is, in general, a loose upper bound on $\opt_i$, even for small edge probabilities. Recall that in \eqref{dual1}, we approximate $\opt_i$ with $r_i\, E_{\umi}[(1-e^{p(m^{\umi}_{\opt})})]$.

 \end{APPENDICES}


\ACKNOWLEDGMENT{We would like to thank anonymous referees for their valuable feedback. We also thank Will Ma for a helpful discussion on the results in \cite{deb}.  
	This research was partially supported by NSF Grant CMMI 1636046 and NSF CAREER Award CMMI 1351838.}


\bibliographystyle{informs2014} 
\bibliography{bib} 



\end{document}